\newcommand{\conge}{\ensuremath{d}}
\newcommand{\diag}[1]{\mathrm{diag}(#1)}
\DeclareMathOperator{\dom}{dom}
\DeclareMathOperator{\supp}{supp}
\newcommand{\nonneg}{\ensuremath{\Real^m_{\ge 0}}}
\newcommand{\pos}{\ensuremath{\Real^m_{>0}}}
\newcommand{\ones}{\mathbf{1}}
\newcommand{\ftilde}{f_0}
\newcommand{\sinit}{s'}
\newcommand{\lyap}{\mathcal{E}}
\newcommand{\lmin}{\lambda_{\mathrm{min}}}
\newcommand{\lmax}{\lambda_{\mathrm{max}}}
\newcommand{\omegaset}{\ensuremath{\Omega_{\delta}}}
\renewcommand{\vec}{}
\newcommand{\mynote}[1]{ }
\begin{document}

\title{A Laplacian Approach to $\ell_1$-Norm Minimization}

\author{Vincenzo Bonifaci}



\begin{abstract}
We propose a novel differentiable reformulation of the linearly-constrained $\ell_1$ minimization problem, also known as the basis pursuit problem. The reformulation is inspired by the Laplacian paradigm of network theory and leads to a new family of gradient-based methods for the solution of $\ell_1$ minimization problems. We analyze the iteration complexity of a natural solution approach to the reformulation, based on a multiplicative weights update scheme, as well as the iteration complexity of an accelerated gradient scheme. The results can be seen as bounds on the complexity of iteratively reweighted least squares (IRLS) type methods of basis pursuit. 
\end{abstract}

\maketitle

\section{Introduction}
\label{sec:intro}
An important primitive in the areas of signal processing and optimization is that of finding a minimum $\ell_1$-norm solution to an underdetermined system of linear equations. Specifically, for some $n\le m$, let $\hat{s} \in \Real^m$ represent an unknown signal, $b \in \Real^n$ a measurement vector, and $A \in \Real^{n \times m}$ a full-rank matrix such that $A\hat{s}=b$. In some circumstances, the unknown signal $\hat{s}$ can be recovered by computing a minimum $\ell_1$-norm solution to the system $As=b$; in other words, solving the following optimization problem: 
\begin{align*}
\text{minimize } & \norm[1]{\vec s} \tag{BP} \\
\text{subject to } & \vec A \vec s = \vec b, \quad \vec s \in \Real^m. 
\end{align*}
This $\ell_1$-minimization problem is known as \emph{basis pursuit}. 
It is a central problem in the theory of sparse representation and arises in several applications, such as imaging and face recognition. Through a standard reduction, it also captures the \emph{$\ell_1$-regression} problem used in statistical estimation. 

The convex optimization problem (BP) can be cast as a linear program and thus could be solved via an interior-point method. Another popular approach to $\ell_1$-minimization  is the \emph{iteratively reweighted least squares} (IRLS) method, which is based on iteratively solving a series of adaptively weighted $\ell_2$-minimization problems. IRLS methods are popular in practice, due to their simplicity and the fact that they do not require  preprocessing nor special initialization rules. Despite this, theoretical guarantees for IRLS methods in the literature are not common, particularly in terms of global convergence bounds. 

This work contributes to developing the understanding and design of IRLS-type methods for basis pursuit. We propose a novel exact reformulation of (BP) as a differentiable convex problem over the positive orthant, which we call the \emph{dissipation minimization} problem. A distinguishing feature of this approach is that it entails the solution of a single differentiable convex problem. The reformulation leads naturally to a new family of IRLS-type methods solving (BP). 

We exemplify this approach by providing global convergence bounds for discrete IRLS-type algorithms for (BP). We explore two possible routes to the solution of the dissipation minimization problem, and thus of (BP), where we use the established framework of first-order optimization methods to derive two provably convergent iterative algorithms. 
We bound their iteration complexity as $O(m^2/\eps^3)$ and $O(m^2 / \eps^2)$, respectively, where $\eps$ is the relative error parameter. These methods are in the IRLS family since each iteration can be reduced to the solution of a weighted least squares problem. Both methods are very simple to implement and the first one exhibits a geometric convergence rate in numerical experiments. 

Our dissipation-based reformulation of (BP) may be of independent interest. It is rooted in the \emph{Laplacian framework} of network theory: it generalizes concepts such as the Laplacian matrix and the transfer matrix, which were originally developed to express the relation between electrical quantities across different terminals of a resistive network. (Many of our formulas have simple interpretations when the constraint matrix $A$ is derived from a network matrix). 

This paper is organized as follows. In Section \ref{sec:dmp}, we present the dissipation minimization reformulation of basis pursuit and some of its structural properties. In Section \ref{sec:equiv} we prove the equivalence between basis pursuit and dissipation minimization. In Section \ref{sec:ode} 
we look at the continuous dynamics obtained by applying mirror descent to the dissipation minimization objective and connect them with existing literature. 
In Section \ref{sec:algo}, we analyze a discretization of these dynamics that yields an iterative IRLS-type method for the solution of the dissipation minimization problem and, hence, of basis pursuit; this method can be seen as an application of the well-known multiplicative weights update scheme, and its iteration complexity is $O(m^2/\eps^3)$. Then, by leveraging Nesterov's accelerated gradient scheme, we present and analyze an improved IRLS-type method with iteration complexity $O(m^2/\eps^2)$. In Section \ref{sec:numerical}, implementations of the two methods are compared against existing solvers from the {\tt l1benchmark} suite \cite{Yang:2013}. 

\emph{Related literature.}
Given its central role in the areas of sparse representation and statistics, the literature on the basis pursuit problem and $\ell_1$-regression is extensive; see for example \cite{Bloomfield:1983,Candes:2005,Chen:2001,Foucart:2013} and references therein. Several algorithms for basis pursuit are reviewed in Chapter 15 of \cite{Foucart:2013}; for an experimental comparison and an application to face recognition, see \cite{Yang:2013}. 


Various versions of IRLS schemes have been studied for a long time \cite{Green:1984,Osborne:1985} and, as already mentioned, the methods have been popular in practice due to their simplicity and experimental performance \cite{Chartrand:2008}. On the other hand, theoretical guarantees for IRLS-type algorithms are few and far between \cite{Beck:2015,Daubechies:2010,Straszak:2016:irls}. 
A recent IRLS algorithm stands out in the context of this paper, as it applies to the basis pursuit problem and comes with a worst-case guarantee: a $\tilde{O}(m^{1/3} \eps^{-8/3})$ iterations algorithm due to Chin et al.~\cite[Theorem 5.1]{Chin:2013}, derived by further developing the approach of Christiano et al.~\cite{Christiano:2011}. 
In this context, our approach breaks the $\eps^{-8/3}$ bound for an IRLS method (at the cost of a worse dependency on $m$). We nevertheless emphasize that the goal of this work is  not to establish the superiority of a specific algorithm, but rather to highlight a new approach that, already when coupled with off-the-shelf optimization methods, offers a principled way to derive IRLS-type algorithms with competitive theoretical performance. 
Subsequently to the first appearance of our results (on arXiv), an improved bound of $\tilde{O}(m^{1/3} \eps^{-2/3} + \eps^{-2})$ iterations for a more sophisticated IRLS-type algorithm for (BP) has been derived by Ene and Vladu \cite{Ene:2019} (again building on the ideas of \cite{Christiano:2011} and \cite{Chin:2013}). While this algorithm has a rather more favorable worst-case dependency on the parameters, in practice it requires roughly $1/\eps$ iterations \cite[Section 4]{Ene:2019}; in contrast, as we observe in Section \ref{sec:numerical}, the experimental convergence rate of our approach is  \emph{geometric}, that is, the iterations required are linear in $\log (1/\eps)$, suggesting that a much stronger theoretical bound may hold in our setting. 

Our reformulation of basis pursuit is new, though it is in part inspired by the Laplacian framework \cite{Strang:1988}. In particular, the definition of the dissipation function is based on a generalization of the Laplacian potential of a network. This reinforces the idea from Chin et al.~\cite{Chin:2013} that concepts originally developed for network optimization can be fruitful in the context of $\ell_1$-regression. 
The dissipation-minimizing dynamics considered in Section \ref{sec:ode} are an application of the \emph{mirror descent} (or \emph{natural gradient}) dynamics \cite{Nemirovski:1983,Amari:2016,Arora:2012,Lu:2018} to our new objective function. 
In Section \ref{sec:pgs} we show, in particular, how the algorithmic framework of Lu, Freund and Nesterov \cite{Lu:2018} (see also \cite{Bauschke:2017}) can be applied to the dissipation minimization problem. 
The improved algorithm discussed in Section \ref{sec:ags} is instead based on Nesterov's well-known \emph{accelerated gradient method} \cite{Nesterov:2005}. 

The dynamics studied in Sections \ref{sec:ode} and \ref{sec:algo} bear some formal similarity to the so-called \emph{Physarum dynamics}, studied in the context of natural computing, which are the network dynamics of a slime mold \cite{Tero:2007,Bonifaci:2012,Straszak:2016:soda,Straszak:2016:irls,Becker:2017}. 
The fact that Physarum dynamics are of IRLS type was first observed in \cite{Straszak:2016:irls}. 
In this context, our result can be seen as the derivation of a Physarum-like dynamics purely from an optimization principle: dissipation minimization following the natural gradient. 
A relevant difference is that the specific dynamics we study is a gradient system, while the dynamics studied in \cite{Straszak:2016:irls,Becker:2017} is provably not a gradient system. This is precisely what enables us to apply the machinery of first-order convex optimization methods, and acceleration in particular. 

We note that a different proof of Theorem \ref{thm:equiv} has been independently provided by Facca, Cardin and Putti \cite{Facca:2019} in the context of the Physarum dynamics.

\emph{Notation.}
For a vector $x \in \Real^m$, we use $\diag{x}$ to denote the $m \times m$ diagonal matrix with the coefficients of $x$ along the diagonal. The inner product of two vectors $x, y \in \Real^m$ is denoted by $\innprod{x}{y} = x\tp y$. The maximum (respectively, minimum) eigenvalue of a diagonalizable matrix $M$ is denoted by $\lmax(M)$ (respectively, $\lmin(M)$). For a vector $x \in \Real^m$, $\norm[p]{x}$ denotes the $\ell_p$-norm of $x$ ($1 \le p \le \infty$), and $\abs{x}$ denotes the vector $y$ such that $y_i=\abs{x_i}$, $i=1,\ldots,m$. Similarly, $x^2$ denotes the vector $y$ such that $y_i=x_i^2$, $i=1,\ldots,m$. With a slight overlap of notation, which should nevertheless not cause any confusion, we instead reserve $x^k$ with a symbolic index $k$ to denote the vector produced by the $k$th step of an iterative algorithm.

\begin{table}
\begin{center}
\begin{tabular}{l|l}
\textbf{Algorithm} & \textbf{Iteration complexity} \\
\hline
Ref.~\cite{Chin:2013} & $\tilde{O}(m^{1/3} \eps^{-8/3})$ \\
PGS -- Theorem \ref{thm:pgs-for-dm} & $\tilde{O}(m^2 \eps^{-3})$ \\
AGS -- Theorem \ref{thm:ags-for-dm} & $O(m^2 \eps^{-2})$ \\
Ref.~\cite{Ene:2019} & $\tilde{O}(m^{1/3} \eps^{-2/3} + \eps^{-2})$ \\
\end{tabular}
\bigskip
\caption{Worst-case iteration complexity of recent IRLS methods for $\ell_1$-norm minimization}
\end{center}
\end{table}

\section{Basis pursuit and the dissipation minimization problem}
\label{sec:dmp}

\subsection{Assumptions on the basis pursuit problem}
\label{sec:bp}

We make the following assumptions on (BP): 
\begin{enumerate}
\item[(A.1)] the matrix $A$ has full rank and $n\le m$; 
\item[(A.2)] the system $As = b$ has at least one solution $\sinit$ such that $\sinit_j \neq 0$ for each $j=1,\ldots,m$. 
\end{enumerate}

\begin{proposition}
\label{prop:bp-assumption}
Assumption (A.2) is without loss of generality, given (A.1).
\end{proposition}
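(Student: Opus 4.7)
The plan is to show that if a BP instance satisfying (A.1) fails to admit a strictly nonzero solution, one can reduce it to an equivalent smaller BP instance that does. First, handle the trivial case $b = 0$ separately (the $\ell_1$-optimum is $s = 0$, so nothing needs to be solved). Otherwise, let $\mathcal{F} = \{s \in \Real^m : As = b\}$ and, for each $j$, let $\mathcal{F}_j = \{s \in \mathcal{F} : s_j = 0\}$.

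The key observation is that coordinate $j$ is forced to vanish on $\mathcal{F}$, i.e.\ $\mathcal{F}_j = \mathcal{F}$, precisely when $b \in \im(A_{-j})$ while $a_j \notin \im(A_{-j})$, where $A_{-j}$ denotes $A$ with column $j$ removed and $a_j$ is the removed column. This is a short rank-nullity computation. In that case, dropping both the variable $s_j$ and one row that has become a linear combination of the others yields a smaller system $\tilde A \tilde s = \tilde b$ with $\tilde A$ of size $(n-1) \times (m-1)$ and full row rank, whose solution set is in bijection with $\mathcal{F}$ (the $j$-th coordinate being restored as $0$). Since the removed coordinate is always zero, the $\ell_1$-norm, and in particular the BP optimum, are preserved.

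Iterating this reduction at most $m$ times, one arrives at a BP instance still satisfying (A.1) in which $\mathcal{F}_j \subsetneq \mathcal{F}$ for every $j$. If the reduced dimension has $n = m$, then $A$ is invertible and $\mathcal{F}$ is the single point $A^{-1} b$, which must have every entry nonzero (else some $\mathcal{F}_j$ would equal $\mathcal{F}$), so (A.2) holds. If $n < m$, then $\mathcal{F}$ is an affine subspace of positive dimension $m - n$ and each $\mathcal{F}_j$ is a proper affine subspace, hence of strictly smaller dimension; a finite union of such subspaces cannot cover $\mathcal{F}$, so some $\sinit \in \mathcal{F}$ has $\sinit_j \neq 0$ for every $j$.

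The main technical care needed lies in the reduction step, where one must verify that dropping a coordinate that is always zero, together with a suitable redundant equation, preserves (A.1). The remainder of the argument is the standard finite-dimensional observation that finitely many affine hyperplanes cannot cover an affine space of strictly higher dimension.
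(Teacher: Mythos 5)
Your proof is correct, but it takes a genuinely different route from the paper's. The paper handles the failure of (A.2) by \emph{enlarging} the instance: every column of $A$ is duplicated, the least-squares solution $u=A\tp(AA\tp)^{-1}b$ is split into a strictly nonzero solution of the doubled instance, and a sign argument shows that optimal solutions of the doubled instance fold back into optimal solutions of the original one with the same $\ell_1$-cost. You instead \emph{shrink} the instance: you characterize the coordinates forced to vanish on the whole feasible set (via $b\in\im(A_{-j})$, $a_j\notin\im(A_{-j})$, which is indeed a correct rank computation), eliminate them together with the row that becomes redundant, and finish with the standard fact that a positive-dimensional real affine space is not a finite union of proper affine subspaces. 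Both arguments are sound; the main trade-off is that the paper's reduction is uniform and fully explicit (it always produces a $2m$-variable instance satisfying (A.2), at the cost of doubling $m$, which only affects the complexity bounds by a constant factor), whereas yours never increases the problem size and isolates exactly why (A.2) can fail, but requires rank tests to detect forced-zero coordinates and a (mildly non-constructive, though easily made constructive by perturbation) covering argument to exhibit $\sinit$. Your separate treatment of $b=0$ and of the square case $n=m$ correctly covers the degenerate endpoints of the iteration.
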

\begin{proof}
If the basis pursuit instance $(A,b)$ satisfies (A.1) but not (A.2), form a new instance $(A',b)$ where $A'$ is obtained from $A$ by duplicating every column. Observe the following about the two instances: 

\begin{itemize}
\item
$A'$ has full rank and $n' = n \le m \le 2m = m'$. 

\item
For any solution to $(A, b)$, there is a solution to $(A', b)$ with the same cost. 

\item 
Let $u = A\tp (A A\tp)^{-1} b$ be the least-square solution to $As=b$. There is at least one solution to $A' s' = b$ with $s'_j\neq 0$ for each $j=1,\ldots,2m$, given by 
\begin{align*}
s'_{2j-1} = \begin{cases} u_j/2 & \text{ if } u_j \neq 0, \\ +1 & \text{ if } u_j = 0, \end{cases}, \qquad 
s'_{2j} = \begin{cases} u_j/2 & \text{ if } u_j \neq 0, \\ -1 & \text{ if } u_j = 0. \end{cases}
\qquad j=1,\ldots,m. 
\end{align*}

\item 
No optimal solution to the instance $(A', b)$ is such that $s'_{2j-1} \cdot s'_{2j} < 0$ for some $j$: if that was the case, one could form a solution of lesser cost by replacing each of $s'_{2j-1}$ and $s'_{2j}$ with their average. Thus, any optimal solution $s'$ to $(A', b)$ can be transformed back into a solution $s$ to $(A, b)$ by taking $s_j = s'_{2j-1} + s'_{2j}$ for each $j=1,\ldots,m$. Such a solution satisfies $\norm[1]{s} = \norm[1]{s'}$ and thus must be optimal for $(A, b)$. 
\end{itemize}
\end{proof}

\begin{remark}
\label{ex:network}
A special case of (BP) is when $A$ is derived from a network matrix. Specifically, consider a connected network with $n+1$ nodes and $m$ edges, and suppose edge $j$ connects node $u$ to node $v$. Define $b_j \in \Real^m$ as $(b_j)_u = 1$, $(b_j)_v=-1$, and all other entries $0$. The matrix $B = [ b_1 \cdots b_m ] \in \Real^{(n+1) \times m}$ is called the \emph{incidence matrix} of the network. For any connected network, the incidence matrix $B$ has rank $n$ and, additionally, any row of $B$ can be expressed as a linear combination of the remaining $n$ rows, because the sum of all rows is a zero vector. Let $A$ be the submatrix of $B$ obtained by deleting an arbitrary row. Then $A$ satisfies assumption (A.1) and thus, without loss of generality, (A.2). A solution $s$ to $As = b$ can be interpreted as an assignment of flow values to each edge such that the net in-flow at every node $v=1,\ldots,n$ matches the prescribed demand $b_v$. 
\end{remark}

\subsection{The dissipation potential}
\label{sec:dm}

In this section we introduce the \emph{dissipation potential}, which is the function on which our reformulation of the basis pursuit problem is based. 

\begin{definition}
The \emph{Laplacian-like matrix} relative to a vector $\vec x \in \nonneg$ is the matrix $L(x) \defas A X A\tp$, where $X=\diag{x}$. 
\end{definition}

\begin{remark}
\label{rem:lapl}
In the network setting described in Remark \ref{ex:network}, 
a vector $x \in \pos$ can be interpreted as a set of weights, or \emph{conductances}, on the edges of the network. Then the matrix $B X B\tp$ is the \emph{weighted Laplacian} of the network \cite{Bollobas:1998,Strang:1988}. The matrix $L(x) = A X A\tp$ is sometimes called the \emph{reduced Laplacian}. 
\end{remark}

\begin{proposition}
\label{prop:pd-of-L}
If $x > 0$, then $L(x)$ is positive definite. 
\end{proposition}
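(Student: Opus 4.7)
The plan is to verify positive definiteness directly from the definition by computing the quadratic form $y^\top L(x) y$ for an arbitrary nonzero $y \in \Real^n$. Using the factorization $L(x) = A X A\tp$ and writing $z = A\tp y \in \Real^m$, the quadratic form becomes
\[
y\tp L(x) y \;=\; z\tp X z \;=\; \sum_{j=1}^m x_j z_j^2,
\]
which is a nonnegative combination of squares since every $x_j > 0$ by hypothesis.

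Next I would argue that this sum is strictly positive whenever $y \neq 0$. This reduces to showing $z = A\tp y \neq 0$, which follows from assumption (A.1): because $A \in \Real^{n \times m}$ has full row rank $n$, the transpose $A\tp \in \Real^{m \times n}$ has trivial kernel (its columns are linearly independent), so $A\tp y = 0$ would force $y = 0$. Combining, $y\tp L(x) y > 0$ for every $y \neq 0$, which is the definition of positive definiteness.

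The proof has essentially one step and no real obstacle; the only thing to be careful about is distinguishing row rank from column rank when invoking (A.1), which is why the hypothesis $n \le m$ matters — it is exactly what makes $A\tp$ injective. Symmetry of $L(x)$, which is implicit in the claim of positive definiteness, is immediate from $L(x)\tp = (A X A\tp)\tp = A X\tp A\tp = A X A\tp = L(x)$ since $X$ is diagonal.
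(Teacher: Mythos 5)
Your proof is correct and is essentially the paper's argument with the standard fact unpacked: the paper factors $L(x) = (A X^{1/2})(A X^{1/2})\tp$ and invokes full rank of $A X^{1/2}$, while you verify the same positivity directly via the quadratic form $y\tp L(x) y = \sum_j x_j (A\tp y)_j^2 > 0$ using injectivity of $A\tp$. Same idea, same use of (A.1); no substantive difference.
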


\begin{proof}
Since $A$ has full rank, so has $A X^{1/2}$; hence $L(x) = (A X^{1/2}) (A X^{1/2})\tp$ is positive definite. 
\end{proof}

The following function definition is central to our approach. 
\begin{definition}
Let $A \in \Real^{n\times m}$, $b \in \Real^n$ be such that (A.1)--(A.2) hold. 
Define $\ftilde, f: \Real^m \to (-\infty,+\infty]$ as
\begin{align}
\ftilde(\vec x) &\defas 
\begin{cases} 
\ones\tp \vec x + \vec b\tp \vec L^{-1} (\vec x) \vec b, & \text{ if } \vec x \in \pos \\
+\infty & \text{ if } \vec x \notin \pos. \\
\end{cases} \\
\label{eq:liminf}
f(x) &\defas \liminf_{x' \to x} \ftilde(x'), \qquad\qquad   x \in \Real^m.
\end{align}
\end{definition}



We call $f$ the \emph{dissipation potential}. 
An equivalent definition of $f$ is as the \emph{convex closure} of $\ftilde$, which is the function whose epigraph in $\Real^{m+1}$ is the closure of the epigraph of $\ftilde$ \cite[Chapter~7]{Rockafellar:1970}. 
The \emph{effective domain} of $f$ is the set \[\dom f \defas \{x \in \Real^m \,:\, f(\vec x) < +\infty\}.\] 
The functions $f$ and $\ftilde$ differ only on the boundary of the positive orthant. 
We will show that $f$ always achieves a minimum on $\nonneg$, and hence on $\Real^m$. One of our main results (Theorem \ref{thm:equiv}) is that this minimum equals the minimum of (BP).

\mynote{%
Note that this coincides with the above $F$ in both the shortest-path setting, as well as for any $A$ with $n=1$.

The domain seems to be: 
\[
\{ 
\vec x \in \Real^m_{\ge 0}\, : \, \exists \vec s \in \Real^m \text{ s.t. } \vec A \vec s = \vec b \, \logand \, \supp(\vec s) \subseteq \supp(\vec x)
\}
\]
but establishing it formally is probably irrelevant. 
}%

\begin{remark}
Consider again the case where the matrix $A$ is derived from a network matrix, as in Remark \ref{ex:network}. The node of the network corresponding to the row that was removed from the incidence matrix to form $A$ is called the \emph{grounded} node. Now assume that for some $u = 1,\ldots,n$ the vector $b \in \Real^n$ is such that $b_v=0$ if $v \neq u$, $b_v = 1$ if $v = u$. Then the Laplacian potential $b\tp L^{-1}(x) b$ yields the \emph{effective resistance} between the grounded node and node $u$ when the conductances of the network are specified by the vector $x$. 
A standard result in network theory is that decreasing the conductance of any edge can only increase the effective resistance between any two nodes (see, for example, \cite{Bollobas:1998,Ghosh:2008}). 
Thus, the minimization of the dissipation potential $f$ involves an equilibrium between two opposing tendencies: decreasing any $x_j$ decreases the linear term $\ones\tp x$, but increases the Laplacian term $b\tp L^{-1}(x) b$. 
\end{remark}

\subsection{Basic properties of the dissipation potential}
We proceed to show that the dissipation potential attains a minimum. We start with some basic properties of $\ftilde$. 

\begin{lemma}
\label{thm:convexity}
The function $\ftilde$ is positive, convex and differentiable on $\pos$. 
\end{lemma}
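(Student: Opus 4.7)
The plan is to verify the three claimed properties in turn. Positivity and differentiability on $\pos$ are essentially immediate consequences of Proposition \ref{prop:pd-of-L}, whereas convexity will be the main substantive step.

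For \emph{positivity} on $\pos$: the linear term $\ones\tp x$ is strictly positive, while $L(x)$ is positive definite by Proposition \ref{prop:pd-of-L}, hence $L^{-1}(x) \succ 0$ and $b\tp L^{-1}(x) b \ge 0$; adding the two gives $\ftilde(x) > 0$. For \emph{differentiability} on $\pos$: the map $x \mapsto L(x) = A\,\diag{x}\,A\tp$ is linear, matrix inversion is $C^\infty$ on the open cone of positive definite matrices (which, again by Proposition \ref{prop:pd-of-L}, contains $L(\pos)$), and $M \mapsto b\tp M b$ is a smooth quadratic form, so $\ftilde$ is $C^\infty$ on $\pos$ as a composition of smooth maps.

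For \emph{convexity}, the cleanest route is to express $b\tp L^{-1}(x) b$ as a pointwise supremum of affine functions of $x$. Denote by $a_j$ the $j$th column of $A$, so that $y\tp L(x) y = \sum_{j=1}^m x_j (a_j\tp y)^2$. Maximizing the strictly concave quadratic $y \mapsto 2\, b\tp y - y\tp L(x) y$ yields the unique optimizer $y^\star = L^{-1}(x) b$ with value $b\tp L^{-1}(x) b$, so one obtains the variational identity
\[
b\tp L^{-1}(x) b \;=\; \sup_{y \in \Real^n} \Bigl\{ 2\, b\tp y - \sum_{j=1}^m x_j (a_j\tp y)^2 \Bigr\}.
\]
For each fixed $y$ the bracketed expression is affine in $x$, so the supremum is convex in $x$; adding the linear term $\ones\tp x$ preserves convexity.

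The main potential obstacle is this convexity step. A direct Hessian calculation would also work: using $\partial L/\partial x_j = a_j a_j\tp$ and $\partial L^{-1}/\partial x_j = -L^{-1} a_j a_j\tp L^{-1}$ one obtains
\[
\frac{\partial^2 (b\tp L^{-1} b)}{\partial x_j \partial x_k} \;=\; 2\,(a_j\tp L^{-1} b)(a_k\tp L^{-1} b)(a_j\tp L^{-1} a_k),
\]
which factors as $2\,D (A\tp L^{-1}(x) A) D$ with $D = \diag{A\tp L^{-1}(x) b}$ and is therefore positive semidefinite. The variational argument, however, renders convexity transparent and avoids this bookkeeping.
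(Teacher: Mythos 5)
Your proof is correct, and the convexity step takes a genuinely different route from the paper. The paper treats $x \mapsto b\tp L^{-1}(x) b$ as the composition of the linear matrix-valued map $x \mapsto A X A\tp$ with the matrix fractional function $Y \mapsto b\tp Y^{-1} b$, and cites the known convexity of the latter on the positive definite cone (Boyd--Vandenberghe, Section 3.1.7). You instead inline that fact via the variational identity
\[
b\tp L^{-1}(x) b = \sup_{y \in \Real^n} \Bigl\{ 2\, b\tp y - \textstyle\sum_{j=1}^m x_j (a_j\tp y)^2 \Bigr\},
\]
exhibiting the Laplacian potential directly as a pointwise supremum of functions affine in $x$. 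The paper's argument is shorter but leans on an external reference; yours is self-contained, and as a bonus the maximizer $y^\star = L^{-1}(x) b$ is exactly the vector of node potentials whose image $A\tp y^\star$ is the voltage vector $\conge(x)$ of Definition \ref{def:conge}, so the variational form foreshadows structure used later. Your alternative Hessian computation reproduces what the paper proves separately in Corollary \ref{cor:hessian}. The positivity and differentiability arguments are minor variants of the paper's (the paper invokes positive definiteness of $L^{-1}(x)$ and the fact that $\ftilde$ is a pole-free rational function on $\pos$, respectively) and are equally valid.
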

\begin{proof}
Positivity follows from the positive-definiteness of $\vec L^{-1}(\vec x)$ for $\vec x \in \Real^m_{>0}$ (implied by Proposition \ref{prop:pd-of-L}). For convexity, it suffices to show that the mapping $\vec x \mapsto \vec b\tp L^{-1}(x) \vec b$ is convex on $\pos$. 
First observe that $\vec x \mapsto \vec A \vec X \vec A\tp$ is a linear matrix-valued function, i.e., each one of the entries of $\vec A \vec X \vec A\tp$ is a linear function of $\vec x$, since multiplying $\vec X$ on the left and right with $\vec A$ and $\vec A\tp$ yields linear combinations of the elements of $\vec x$. Second, the matrix to scalar function $\vec Y \mapsto \vec b\tp \vec Y^{-1} \vec b$ is convex on the cone of positive definite matrices, for any $\vec b \in \Real^n$ (see for example~\cite[Section 3.1.7]{Boyd:2004}). By combining the two facts above, it follows that the composition $\vec x \mapsto \vec b\tp (\vec A \vec X \vec A\tp)^{-1} \vec b$ is convex, and hence so is $\ftilde$. 
Finally, since the entries of $L(x)$ are linear functions of $x$, the function $\ftilde$ is a rational function with no poles in $\pos$, hence differentiable. 
\end{proof}

To argue that $f$ attains a minimum, we first recall some notions from convex analysis \cite{Rockafellar:1970,Beck:2017}. 
An extended real-valued function $f: \Real^m \to [-\infty,+\infty]$ is called \emph{proper} if its domain is nonempty and the function never attains the value $-\infty$. It is called \emph{closed} if its epigraph is closed. It is called \emph{coercive} if it is proper and $\lim_{\norm{x} \to \infty} f(x)=+\infty$. 

\begin{lemma}
\label{lem:closedness}
The function $f$ is nonnegative, proper, closed and convex on $\Real^m$. 
\end{lemma}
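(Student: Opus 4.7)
The plan is to derive each of the four properties from the corresponding property of $\ftilde$ together with the fact that $f$ is, by construction, the lower semi-continuous envelope (equivalently, the closure) of $\ftilde$.

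\textbf{Nonnegativity.} By Lemma \ref{thm:convexity}, $\ftilde(x) \geq 0$ for all $x \in \pos$, and $\ftilde(x) = +\infty$ otherwise, so $\ftilde \geq 0$ pointwise on $\Real^m$. Since the liminf of a nonnegative function is nonnegative, \eqref{eq:liminf} gives $f \geq 0$ on $\Real^m$.

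\textbf{Properness.} From nonnegativity, $f$ never attains the value $-\infty$. To see that $\dom f \neq \emptyset$, note that for any $x \in \pos$, taking the constant sequence $x' = x$ shows $f(x) \leq \ftilde(x) < +\infty$ (since $\ftilde$ is real-valued on the open set $\pos$). Thus $\pos \subseteq \dom f$.

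\textbf{Closedness.} The definition \eqref{eq:liminf} is exactly the lower semi-continuous envelope of $\ftilde$, so $f$ is lsc by construction. Since $f$ is proper and lsc, its epigraph is closed, which is the definition of being closed.

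\textbf{Convexity.} First, $\ftilde$ (extended by $+\infty$ outside $\pos$) is convex on $\Real^m$: on its effective domain $\pos$ convexity follows from Lemma \ref{thm:convexity}, and the extension preserves the Jensen inequality since $\pos$ is itself convex and the value outside is $+\infty$. Second, the closure of a proper convex function is convex---this is the standard result \cite[Theorem 7.4]{Rockafellar:1970} that the closure of the (convex) epigraph of $\ftilde$ is again convex, and equals the epigraph of $f$. Hence $f$ is convex.

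I expect the only point that requires care is aligning the liminf definition in \eqref{eq:liminf} with the convex-analytic notion of closure, but this is a textbook identity for proper functions and is the reason the definition was chosen. No step involves nontrivial calculation.
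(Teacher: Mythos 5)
Your proposal is correct and follows essentially the same route as the paper's proof: convexity and properness of $\ftilde$ from Lemma \ref{thm:convexity}, the identification of $f$ with the closure of $\ftilde$ together with \cite[Theorem 7.4]{Rockafellar:1970} for closedness and convexity, and nonnegativity directly from the $\liminf$ definition \eqref{eq:liminf}. The only difference is that you spell out each step in slightly more detail, which is fine.
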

\begin{proof}
By Lemma \ref{thm:convexity}, $\ftilde$ is convex on $\Real^m$, since it is convex on its effective domain. Moreover $\ftilde$ is proper, since $L^{-1}(x)$ is positive definite  and thus $0<\ftilde(x)<+\infty$ for any $\vec x \in \pos$. By construction, $f$ coincides with the closure of $\ftilde$ and thus it is a closed proper convex function \cite[Theorem 7.4]{Rockafellar:1970}.  Its nonnegativity follows from the positivity of $\ftilde$ and from  \eqref{eq:liminf}. 
\end{proof}

\begin{corollary}
\label{cor:achieved}
The function $f$ attains a minimum on $\nonneg$. 
\end{corollary}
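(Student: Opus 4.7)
The plan is to derive Corollary \ref{cor:achieved} from Lemma \ref{lem:closedness} by showing that $f$ is coercive on $\nonneg$, so that the Weierstrass-type theorem for closed proper convex functions applies. By Lemma \ref{lem:closedness} the function $f$ is already closed, proper, and convex, so the only missing ingredient is that its sublevel sets are bounded.

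First I would observe that $\dom f \subseteq \nonneg$ by construction: outside $\pos$ the function $\ftilde$ equals $+\infty$, and since $f$ is defined as the pointwise $\liminf$ of $\ftilde$ via \eqref{eq:liminf}, $f(x) = +\infty$ whenever $x$ has a strictly negative coordinate (any sequence $x' \to x$ must eventually violate $x' \in \pos$). So minimizing over $\Real^m$ is the same as minimizing over $\nonneg$.

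Next I would establish the pointwise lower bound $f(x) \geq \ones\tp x$ for all $x \in \nonneg$. On $\pos$ this is immediate: $\ftilde(x) = \ones\tp x + b\tp L^{-1}(x) b \geq \ones\tp x$, because $L^{-1}(x)$ is positive definite by Proposition \ref{prop:pd-of-L}, hence $b\tp L^{-1}(x) b \geq 0$. For $x$ on the boundary of $\nonneg$, passing to the $\liminf$ along any sequence $x' \to x$ with $x' \in \pos$ preserves the inequality, since $\ones\tp x'$ is continuous.

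From this lower bound I would deduce coercivity on $\nonneg$: any sublevel set $\{x \in \nonneg : f(x) \leq \alpha\}$ is contained in $\{x \in \nonneg : \ones\tp x \leq \alpha\}$, and the latter is bounded (each coordinate $x_j$ lies in $[0,\alpha]$). Combined with closedness of $f$, the sublevel sets of $f$ are compact. Picking any $\alpha \geq \inf f \geq 0$ (finite, since $f$ is proper) gives a nonempty compact sublevel set on which the lower semicontinuous function $f$ attains its infimum, which is therefore the global minimum on $\nonneg$.

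I do not foresee a substantive obstacle; the only delicate point is being careful that the bound $f \geq \ones\tp x$ holds on the boundary, which is a one-line consequence of the $\liminf$ definition of $f$. Everything else is the standard Weierstrass argument for coercive closed convex functions.
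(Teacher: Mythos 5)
Your proposal is correct and follows essentially the same route as the paper: both establish coercivity of $f$ from the nonnegativity of the Laplacian term (so $f(x) \ge \ones\tp x$ on its domain) and then invoke a Weierstrass-type existence theorem for closed proper coercive convex functions. You simply spell out the compact-sublevel-set argument that the paper delegates to a citation.
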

\begin{proof}
Note that $\lim_{\norm{\vec x} \to \infty} f(\vec x) = \infty$, because $\vec b\tp (\vec A \vec X \vec A\tp)^{-1} \vec b \ge 0$ for any $\vec x \in \dom \ftilde$, and $\ones\tp \vec x \to \infty$ as $\norm{\vec x} \to \infty$ with $x \in \dom \ftilde$. In other words, $f$ is also a \emph{coercive} function and therefore, it attains a minimal value over any nonempty closed set intersecting its domain \cite[Theorem 2.14]{Beck:2017}; in particular, it attains its minimal value over $\nonneg$.  
\end{proof}

Since $f(x) = \liminf_{x' \to x} \ftilde(x')$, the minimum attained by $f$ over $\nonneg$ equals $\inf_{x > 0} \ftilde(x)$. 
Note also that this minimum may be attained on the boundary of $\dom f$.

%
%

\subsection{Gradient and Hessian}
In this section we derive some formulas for the gradient and Hessian of $f$ on the interior of its domain. 

\mynote{%
\begin{lemma}
\label{lem:orthant}
A point $\vec x^* \in \dom f$ is a minimizer of $f$ iff there exists $\vec g \in \partial f(\vec x^*)$ such that
\begin{align*}
g_j \ge 0, & \qquad \text{ for each } j=1,\ldots,m \,:\, x^*_j = 0, \\
g_j = 0, & \qquad \text{ for each } j=1,\ldots,m \,:\, x^*_j > 0. 
\end{align*}
\end{lemma}

\begin{proof}
Beck [Corollary 3.68], applied with $C=\Real^m_{\ge 0}$. (Note that the minimizers over $\Real^m_{\ge 0}$ coincide with the minimizers over $\dom F$). 

We might also want to use that $\partial f(\vec x) = cl(conv S(\vec x)) + K(\vec x)$ [Rockafellar Thm 25.6] where $K(x)$ is the normal cone at $x$ and $S(x)$ is the set of limits of sequences of the gradients as $x^k$ converges to $x$. 
\end{proof}
}

\begin{definition}
\label{def:conge}
Let $x \in \pos$. The \emph{voltage vector} at $x$ is  
$\conge(\vec x) \defas \vec A\tp \vec L^{-1}(x) \vec b \in \Real^m$. 
\end{definition}

\begin{remark}
In the network setting described in Remark \ref{ex:network}, $d_j(x)$ expresses the \emph{voltage along edge $j$} when an external current $b_u$ enters each node $u=1,\ldots,n$ (and a balancing current $-\sum_u b_u$ enters the grounded node). 
\end{remark}

The next lemma relates the gradient $\nabla f(x)$ to the voltage vector at $x$. 
\begin{lemma}
\label{lem:gradient}
Let $\vec x \in \pos$. For any $j=1,\ldots,m$, 
$ \frac{\partial f(x)}{\partial x_j} = 1 - (\vec a_j\tp \vec L^{-1}(\vec x) \vec b)^2 = 1 - \conge^2_j(\vec x),
$
where $a_j$ stands for the $j$th column of $\vec A$. 
\end{lemma}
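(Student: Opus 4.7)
The plan is to differentiate the two summands of $\ftilde$ separately on $\pos$, where Lemma \ref{thm:convexity} guarantees $\ftilde$ (and hence $f$, which agrees with $\ftilde$ on $\pos$) is smooth. The derivative of the linear term $\ones\tp x$ with respect to $x_j$ is immediately equal to $1$, so the work is entirely in differentiating $g(x) \defas b\tp L^{-1}(x) b$.

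For $g$, I would first write $L(x) = A\diag{x}A\tp = \sum_{k=1}^m x_k \, a_k a_k\tp$, from which it is immediate that $\partial L(x)/\partial x_j = a_j a_j\tp$. Then I would invoke the standard matrix-calculus identity
\[
\frac{\partial}{\partial x_j} L^{-1}(x) = -L^{-1}(x) \frac{\partial L(x)}{\partial x_j} L^{-1}(x),
\]
valid because $L(x)$ is positive definite on $\pos$ by Proposition \ref{prop:pd-of-L} (and hence invertible in a neighborhood of any $x \in \pos$). This identity follows by differentiating $L(x) L^{-1}(x) = I$ and solving for $\partial L^{-1}/\partial x_j$.

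Combining these observations gives
\[
\frac{\partial g(x)}{\partial x_j} = -b\tp L^{-1}(x) a_j a_j\tp L^{-1}(x) b = -\bigl(a_j\tp L^{-1}(x) b\bigr)^2,
\]
where the last equality uses that $a_j\tp L^{-1}(x) b$ is a scalar and $L^{-1}(x)$ is symmetric. Recognizing $a_j\tp L^{-1}(x) b$ as the $j$th component of $\conge(x) = A\tp L^{-1}(x) b$ (Definition \ref{def:conge}), the claim $\partial f(x)/\partial x_j = 1 - \conge_j^2(x)$ follows.

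The only conceptual step is the matrix-inverse derivative identity; the rest is symbol-pushing. There is no real obstacle — the main thing to be careful about is that the identity requires invertibility in a neighborhood of $x$, which is guaranteed since $\pos$ is open and $L$ depends continuously on $x$ with $L(x) \succ 0$.
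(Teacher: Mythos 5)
Your proof is correct and follows essentially the same route as the paper: compute $\partial L/\partial x_j = a_j a_j\tp$ from $L(x) = \sum_k x_k a_k a_k\tp$, apply the matrix-inverse derivative identity, and contract to $-(a_j\tp L^{-1}(x) b)^2$. The extra remark on invertibility in a neighborhood of $x$ is a fine (if implicit in the paper) justification.
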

\begin{proof}
First observe that 
$ \vec L(\vec x) = \vec A X \vec A\tp = \sum_{j=1}^m x_j \vec a_j \vec a_j\tp $
and thus $\partial \vec L/\partial x_j = a_j a_j\tp$. 
We apply the formula for the derivative of a matrix inverse: 
\begin{equation}
\label{eq:matrixinverse}
\frac{\partial L^{-1}}{\partial x_j} = - L^{-1} \frac{\partial L}{\partial x_j} L^{-1}. 
\end{equation} 
We obtain
\[
\frac{\partial \vec b\tp \vec L^{-1} \vec b}{\partial x_j} = -\vec b\tp \vec L^{-1} \frac{\partial \vec L}{\partial x_j} \vec L^{-1} \vec b = -\vec b\tp \vec L^{-1}  \vec a_j \vec a_j\tp \vec L^{-1} \vec b = - (\vec a_j\tp \vec L^{-1} \vec b)^2. 
\]
The claim follows by the definition of $f$. 
\end{proof}

\mynote{%
The gradient does not exist outside the relative interior of the domain [Rockafellar Corollary 25.1.1]. However, the subdifferential may not be empty there. In particular, it will not be empty at a global minimum (since the zero vector belongs to the subdifferential there). 
}%


To express the Hessian of $f$, in addition to the voltages we need the notion of transfer matrix. 
\begin{definition}
\label{def:T-matrix}
Let $x\in \pos$. 
The \emph{transfer matrix} at $x$ is $\vec T(x) \defas \vec A\tp \vec L^{-1}(x) \vec A.$ 
\end{definition}

\mynote{Also called transpedance matrix, or constrained inverse of $\vec C$: $T_{ij}$ is the voltage difference observed across edge $i$ when a unit current is sent across the terminals of edge $j$; indeed, decompose the unit current $\vec \chi_j$ into a non-circulatory part $\vec C \vec P \vec x$ and a circulatory part $\vec P' \vec x$, then $\vec P \vec x$ gives the voltages across the edges and so \[T_{ij}=\vec \chi_i\tp \vec P(\vec C \vec P+ \vec P')^{-1} \vec \chi_j = \vec \chi_i\tp \vec P \vec x\] gives the voltage across $i$. Note: apart from redundancies, this is essentially the same as the $\Pi$ projection matrix of [Vishnoi, Chapter 4].}


\begin{remark}
In the network setting described in Remark \ref{ex:network}, the transfer matrix $T(x)$ expresses the relation between input currents and output voltages, when the conductances are given by the vector $x$. Namely, $T_{ij}(x)$ is the amount of voltage observed along edge $i$ of the network when a unit external current is applied between the endpoints of edge $j$. 
\end{remark}

\begin{corollary}
\label{cor:hessian}
For any $x > 0$, 
$
\nabla^2 f(\vec x) = 2 \cdot (\conge(\vec x) \cdot \conge(\vec x)\tp) \odot \vec T(\vec x), 
$
where $\odot$ denotes the \emph{Schur matrix product} defined by $(U \odot V)_{ij} = U_{ij} \cdot V_{ij}$. 
\end{corollary}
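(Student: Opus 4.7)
The plan is to compute the Hessian entrywise by differentiating the gradient formula from Lemma \ref{lem:gradient} and to then recognize the resulting expression as a Schur product.

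First, I start from Lemma \ref{lem:gradient}, which says $\partial f / \partial x_j = 1 - (a_j\tp L^{-1}(x) b)^2$. To get $(\nabla^2 f)_{ij}$, I differentiate this with respect to $x_i$, treating $a_j\tp L^{-1}(x) b$ as the quantity being squared. By the chain rule,
\[
\frac{\partial^2 f(x)}{\partial x_i \partial x_j} = -2 (a_j\tp L^{-1}(x) b) \cdot \frac{\partial}{\partial x_i}(a_j\tp L^{-1}(x) b).
\]

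Next, I invoke the matrix-inverse derivative formula \eqref{eq:matrixinverse} used in the proof of Lemma \ref{lem:gradient}, together with the identity $\partial L/\partial x_i = a_i a_i\tp$ noted there. This gives
\[
\frac{\partial}{\partial x_i}(a_j\tp L^{-1} b) = - a_j\tp L^{-1} a_i a_i\tp L^{-1} b = -(a_j\tp L^{-1} a_i)\,(a_i\tp L^{-1} b) = - T_{ji}(x)\, \conge_i(x),
\]
where I used the definitions of $T$ and $\conge$. Substituting back and noting that $a_j\tp L^{-1} b = \conge_j(x)$ yields
\[
\frac{\partial^2 f(x)}{\partial x_i \partial x_j} = 2\, \conge_i(x)\, \conge_j(x)\, T_{ij}(x),
\]
using the symmetry $T_{ji}=T_{ij}$ (which follows because $L^{-1}$ is symmetric).

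Finally, I observe that the right-hand side is exactly the $(i,j)$-entry of $2(\conge(x)\conge(x)\tp)\odot T(x)$, since $(\conge(x)\conge(x)\tp)_{ij} = \conge_i(x)\conge_j(x)$ and the Schur product acts entrywise by definition. Assembling the entries gives the claimed matrix identity. I do not anticipate any real obstacle: the argument is a direct second differentiation, and the only step that requires care is keeping track of symmetry when identifying $T_{ji}$ with $T_{ij}$ so that the Hadamard/Schur structure emerges cleanly.
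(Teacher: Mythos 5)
Your proof is correct and follows essentially the same route as the paper: both differentiate the gradient formula of Lemma \ref{lem:gradient} entrywise using the matrix-inverse derivative identity \eqref{eq:matrixinverse} and $\partial L/\partial x_i = a_i a_i\tp$, arriving at $[\nabla^2 f(x)]_{ij} = 2\,\conge_i(x)\conge_j(x)\,a_i\tp L^{-1} a_j$ and identifying this with the Schur product via Definition \ref{def:T-matrix}. Your explicit remark on the symmetry $T_{ji}=T_{ij}$ is a minor bookkeeping point the paper leaves implicit, but there is no substantive difference.
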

\begin{proof}
For any $i,j = 1,\ldots,m$, by Lemma \ref{lem:gradient} and applying once more  \eqref{eq:matrixinverse}, we get
\[ 
[\nabla^2 f(\vec x)]_{ij} = 2 (\vec b\tp \vec L^{-1} \vec a_i \vec a_i\tp \vec L^{-1} \vec a_j \vec a_j\tp \vec L^{-1} \vec b) = 2 \, \conge_i(\vec x) \conge_j(\vec x) \vec a_i\tp \vec L^{-1} \vec a_j. 
\]
The claim follows by Definition \ref{def:T-matrix}. 
\end{proof}


\mynote{%
Although the Hessian is always PSD, it can be singular at a global minimum even when this is unique. Consider $A=(1\; 1)$, $b=1$, $c_1=1/2$, $c_2=1$. Then $L^{-1}=1/(2x_1+x_2)$, $\conge_1=2/(2x_1+x_2)$, $\conge_2=1/(2x_1+x_2)$,  and the Hessian is
\[
\nabla^2 f(x) = \frac{2}{(2x_1+x_2)^3} \left(\begin{array}{cc}
4 & 2 \\
2 & 1 
\end{array} \right). 
\]
In other words, $f$ is convex but not strongly convex. 
}


\subsection{Bounds on the norms of gradient and Hessian}
\label{sec:norm-bounds}
In this section we derive some norm bounds for the gradient and Hessian of the dissipation potential $f$; they will be used crucially to derive complexity bounds for the algorithms studied in Section \ref{sec:algo}. 

Two matrices $M$, $M'$ are called \emph{congruent} if there is a nonsingular matrix $S$ such that $M' = SMS\tp$. 
For the proofs in this section, the main tool we rely on is the following algebraic fact relating the eigenvalues of congruent matrices; see for example \cite[Theorem 4.5.9]{Horn:2013} for a proof. 
\begin{theorem}[Ostrowski]
\label{thm:ostrowski}
Let $M, S \in \Real^{m\times m}$ be two symmetric matrices, with $S$ nonsingular. For $k=1,\ldots,m$, let $\lambda_k(M)$, $\lambda_k(SMS\tp)$ denote the $k$-th largest eigenvalue of $M$ and $SMS\tp$, respectively. For each $k=1,\ldots,m$ there is a positive real number $\theta_k \in [\lmin(S S\tp), \lmax(S S\tp)]$ such that 
\begin{equation}
\label{eq:ostrowski}
\lambda_k(S M S\tp) = \theta_k \lambda_k(M). 
\end{equation}
\end{theorem}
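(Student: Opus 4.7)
The plan is to derive Ostrowski's inequality from the Courant--Fischer min-max characterization of eigenvalues, combined with a multiplicative decomposition of the generalized Rayleigh quotient. The key observation is that for every $x \neq 0$,
\[
\frac{x\tp (SMS\tp) x}{x\tp x} \;=\; \frac{x\tp SMS\tp x}{x\tp SS\tp x} \cdot \frac{x\tp SS\tp x}{x\tp x} \;=\; \rho(S\tp x)\cdot \sigma(x),
\]
where $\rho(u) := u\tp M u/u\tp u$ is the Rayleigh quotient of $M$ and $\sigma(x) := x\tp C x/x\tp x$ is the Rayleigh quotient of $C := SS\tp$. Since $C$ is symmetric positive definite, $\sigma(x) \in [\lmin(C), \lmax(C)]$ uniformly in $x$.

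First, I would dispose of the degenerate case: by Sylvester's law of inertia, $M$ and $SMS\tp$ share the same number of positive, zero, and negative eigenvalues, so $\lambda_k(SMS\tp)=0$ if and only if $\lambda_k(M)=0$, and in this situation any $\theta_k \in [\lmin(C), \lmax(C)]$ fulfils \eqref{eq:ostrowski}.

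The main case is $\lambda_k(M) > 0$; the case $\lambda_k(M) < 0$ will follow by applying the positive case to $-M$ together with the identity $\lambda_k(-M) = -\lambda_{m-k+1}(M)$. To prove $\lambda_k(SMS\tp) \ge \lmin(C)\,\lambda_k(M)$, I would choose a $k$-dimensional subspace $V_M$ attaining the max in $\lambda_k(M)=\max_{\dim V=k}\min_{0\neq u\in V}\rho(u)$ and form the preimage $V := (S\tp)^{-1}(V_M)$, which is also $k$-dimensional because $S$ is nonsingular. For every $x\in V$, setting $u:=S\tp x\in V_M$ yields $\rho(u)\ge \lambda_k(M)>0$, so multiplication by $\sigma(x)\ge \lmin(C)$ preserves the direction of the inequality; taking the max over $V$ then gives the required bound. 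The upper bound $\lambda_k(SMS\tp)\le \lmax(C)\,\lambda_k(M)$ is obtained symmetrically via the dual min-max form of Courant--Fischer, applied to an $(m-k+1)$-dimensional subspace.

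The main obstacle is sign bookkeeping in the upper-bound step: on an $(m-k+1)$-dimensional subspace the values of $\rho$ can straddle zero, so multiplying the bound $\rho(u) \le \lambda_k(M)$ by $\sigma(x)$ does not automatically preserve the inequality. The fix is a case split: on vectors with $\rho(u)\ge 0$ one uses $\rho(u)\sigma(x)\le \rho(u)\lmax(C)\le \lambda_k(M)\lmax(C)$, while on vectors with $\rho(u)<0$ the product $\rho(u)\sigma(x)$ is already negative and hence trivially below $\lambda_k(M)\lmax(C) > 0$. Defining $\theta_k := \lambda_k(SMS\tp)/\lambda_k(M)$ when $\lambda_k(M)\neq 0$ then produces a positive number in $[\lmin(C), \lmax(C)]$, as required.
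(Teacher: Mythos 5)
Your proof is correct. The paper does not prove this statement itself -- it simply cites Horn and Johnson, Theorem 4.5.9 -- and your argument via the Courant--Fischer characterization together with the factorization of the Rayleigh quotient of $SMS\tp$ as $\rho(S\tp x)\,\sigma(x)$ is essentially the standard proof given there, with the zero-eigenvalue case (Sylvester's law of inertia) and the sign bookkeeping in the upper-bound step handled correctly.
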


\begin{lemma}
\label{lem:spectrum-T}
Let $x \in \pos$. Each nonzero eigenvalue of $T(x)$ is at least $(\max_{i=1,\ldots,m} x_i)^{-1}$ and at most $(\min_{i=1,\ldots,m} x_i)^{-1}$. 
\end{lemma}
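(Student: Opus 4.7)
The plan is to relate $T(x)$ to an orthogonal projection matrix via a congruence transformation by $X^{1/2}$, and then invoke Ostrowski's theorem (Theorem \ref{thm:ostrowski}) to transfer the eigenvalue information from the projection back to $T(x)$.

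First I would conjugate $T(x)$ by $X^{1/2}$. Setting $B \defas A X^{1/2}$, one has $L(x) = AXA\tp = BB\tp$, and so
\[
X^{1/2} \, T(x) \, X^{1/2} = X^{1/2} A\tp (BB\tp)^{-1} A X^{1/2} = B\tp (BB\tp)^{-1} B.
\]
Since $A$ has full row rank $n$ and $X^{1/2}$ is nonsingular (as $x \in \pos$), $B$ also has full row rank $n$, and the matrix $B\tp(BB\tp)^{-1} B$ is the orthogonal projector onto the row space of $B$, an $n$-dimensional subspace of $\Real^m$. Its spectrum is therefore $\{1,\ldots,1,0,\ldots,0\}$ with the eigenvalue $1$ appearing exactly $n$ times and the eigenvalue $0$ appearing $m-n$ times.

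Next I would apply Theorem \ref{thm:ostrowski} with $M = T(x)$ and $S = X^{1/2}$, which is symmetric and nonsingular with $SS\tp = X$, so that $\lmin(SS\tp) = \min_i x_i$ and $\lmax(SS\tp) = \max_i x_i$. The theorem yields, for each $k=1,\ldots,m$, a scalar $\theta_k \in [\min_i x_i, \max_i x_i]$ such that
\[
\lambda_k\bigl(X^{1/2} T(x) X^{1/2}\bigr) = \theta_k \, \lambda_k(T(x)).
\]
Note that $T(x)$ has the same rank as $X^{1/2} T(x) X^{1/2}$, namely $n$, because conjugation by the invertible matrix $X^{1/2}$ preserves rank. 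Hence the nonzero eigenvalues of $T(x)$ are exactly $\lambda_1(T(x)), \ldots, \lambda_n(T(x))$, and these correspond under Ostrowski's relation to the eigenvalues of the projector that are equal to $1$.

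Finally I would conclude: for $k=1,\ldots,n$ the identity $1 = \theta_k \lambda_k(T(x))$ gives
\[
\lambda_k(T(x)) = \frac{1}{\theta_k} \in \left[\frac{1}{\max_i x_i}, \frac{1}{\min_i x_i}\right],
\]
which is the claimed bound on every nonzero eigenvalue. The only subtlety worth double-checking is the alignment of indices in Ostrowski's theorem, i.e., that the $n$ largest eigenvalues of $T(x)$ are matched with the $n$ largest (equal to $1$) eigenvalues of the projector; this is immediate from the equality of ranks together with the fact that $\theta_k > 0$, so a zero eigenvalue on one side forces a zero eigenvalue on the other.
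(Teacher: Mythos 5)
Your proof is correct and follows essentially the same route as the paper: both identify $X^{1/2} T(x) X^{1/2}$ as an orthogonal projector with spectrum in $\{0,1\}$ and then invoke Theorem \ref{thm:ostrowski} to transfer the eigenvalue bounds, the only cosmetic difference being that the paper writes the congruence as $T(x) = X^{-1/2}\Pi(x)X^{-1/2}$ (so the scalars $\theta_k$ lie in the spectrum of $X^{-1}$) while you invert the relation at the end. Your explicit remark on index alignment is a fair point that the paper handles implicitly via the equality of zero-eigenvalue multiplicities.
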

\begin{proof}
Consider the matrix $\Pi(x) \defas X^{1/2} T(x) X^{1/2}$. By Definition \ref{def:T-matrix}, 
\[
\Pi(x) = (A X^{1/2})\tp (A X A\tp)^{-1} (A X^{1/2}).
\]
Hence, $\Pi(x)$ is the orthogonal projection matrix that projects onto the range of $(A X^{1/2})\tp$. In particular, $\Pi(x)^2 = \Pi(x)$ and each eigenvalue of $\Pi(x)$ equals 0 or 1. Since $T(x) = X^{-1/2} \Pi(x) X^{-1/2}$, the matrices $T(x)$ and $\Pi(x)$ are congruent. By Theorem \ref{thm:ostrowski}, the algebraic multiplicity of the zero eigenvalue of $T(x)$ and $\Pi(x)$ is the same, and each positive eigenvalue of $T(x)$ must lie between the smallest and the largest eigenvalue of $X^{-1}$. These are $(\max_i x_i)^{-1}$ and $(\min_i x_i)^{-1}$, respectively. 
\end{proof}

\begin{lemma}
\label{lem:cong-bound}
Let $x \in \pos$. Then $\norm[\infty]{\conge(x)} \le (\min_{i=1,\ldots,m} x_i)^{-1} \cdot \norm[2]{s}$, 
where $s$ is any solution to $As=b$. In particular, for $c_{A,b} \defas b\tp (A A\tp)^{-1} b$, 
\begin{equation}
\label{eq:cong-bound}
\norm[\infty]{\conge(x)} \le (\min_{i=1,\ldots,m} x_i)^{-1} \, (c_{A,b})^{1/2}. 
\end{equation}
Additionally, if $s^*$ is an optimal solution to (BP), 
\begin{equation}
\label{eq:l1-vs-l2}
c_{A,b}^{1/2} \le \norm[1]{s^*} \le (m \cdot c_{A,b})^{1/2}. 
\end{equation}
\end{lemma}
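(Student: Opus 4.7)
The plan is to observe that the voltage vector $\conge(x)$ can be rewritten as $T(x) s$ for any solution $s$ of $As=b$, and then to exploit the spectral bound on $T(x)$ established in Lemma \ref{lem:spectrum-T}. The rest follows by choosing particular solutions and by the interplay between $\ell_1$ and $\ell_2$ norms.

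For the first inequality, I would start from the identity
\[
\conge(x) = A\tp L^{-1}(x) b = A\tp L^{-1}(x) (As) = T(x) s,
\]
valid for any $s$ with $As = b$. Then I would bound
\[
\norm[\infty]{\conge(x)} \le \norm[2]{T(x) s} \le \lmax(T(x)) \, \norm[2]{s},
\]
where the first inequality is $\norm[\infty]{\cdot} \le \norm[2]{\cdot}$ and the second uses that $T(x)$ is symmetric positive semidefinite. Lemma \ref{lem:spectrum-T} gives $\lmax(T(x)) \le (\min_i x_i)^{-1}$, which yields the first claim.

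To obtain \eqref{eq:cong-bound}, I would apply the first inequality to the minimum-$\ell_2$-norm solution $u = A\tp (A A\tp)^{-1} b$, for which a direct computation gives $\norm[2]{u}^2 = b\tp (A A\tp)^{-1} A A\tp (A A\tp)^{-1} b = c_{A,b}$.

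For \eqref{eq:l1-vs-l2}, the upper bound uses that $s^*$ is an optimum of (BP), hence $\norm[1]{s^*} \le \norm[1]{u}$, combined with the standard Cauchy--Schwarz inequality $\norm[1]{u} \le \sqrt{m}\, \norm[2]{u} = \sqrt{m \cdot c_{A,b}}$. The lower bound uses $\norm[1]{s^*} \ge \norm[2]{s^*}$ together with the fact that $u$ is the minimum-$\ell_2$-norm solution to $As = b$, so $\norm[2]{s^*} \ge \norm[2]{u} = \sqrt{c_{A,b}}$. The only mildly subtle step is noticing the $s = T(x)s$ reformulation in the first bullet; once this is in place, everything else is a direct application of elementary norm inequalities and the previously established spectral bound.
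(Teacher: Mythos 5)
Your proposal is correct and follows essentially the same route as the paper: the identity $\conge(x) = T(x)s$, the spectral bound on $T(x)$ from Lemma \ref{lem:spectrum-T}, the least-squares solution $u$ for \eqref{eq:cong-bound}, and the chain of norm inequalities combining the $\ell_2$-optimality of $u$ with the $\ell_1$-optimality of $s^*$ for \eqref{eq:l1-vs-l2}. No gaps.
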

\begin{proof}
Note that $\conge(x) = A\tp L^{-1}(x) b = A\tp L^{-1} A s = T(x) s$. Hence
\begin{equation}
\label{eq:txs}
	\norm[\infty]{\conge(x)} = \norm[\infty]{T(x) s} \le \norm[2]{T(x) s}. 
\end{equation}
Since the largest eigenvalue of $T(x)$ is at most $(\min_i x_i)^{-1}$ by Lemma \ref{lem:spectrum-T}, we can bound $\norm[2]{T(x) s} \le (\min_i x_i)^{-1} \norm[2]{s}$, proving the first part of the claim. For the second part, consider the least square solution $u \defas A\tp (A A\tp)^{-1} b$. Then $\norm[2]{u} = c_{A,b}^{1/2}$, and using the optimality of $u$ for the $\ell_2$ norm and of $s^*$ for the $\ell_1$ norm we derive
\[
c_{A,b} = \norm[2]{u}^2 \le \norm[2]{s^*}^2 \le \norm[1]{s^*}^2 \le \norm[1]{u}^2 \le m \norm[2]{u}^2 = m \cdot c_{A,b}. \qedhere
\]
\end{proof}



\begin{corollary}
\label{cor:grad-ub}
If $x \in \pos$, then
\begin{equation}
\norm[\infty]{\nabla f(x)} \le 1 + (\min_{i=1,\ldots,m} x_i)^{-2} \, c_{A,b}. 
\end{equation}
\end{corollary}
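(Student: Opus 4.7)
The plan is a one-line chain of inequalities combining Lemma \ref{lem:gradient} with the bound \eqref{eq:cong-bound} of Lemma \ref{lem:cong-bound}.

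First, by Lemma \ref{lem:gradient}, for each coordinate $j$ we have $[\nabla f(x)]_j = 1 - \conge_j^2(x)$, and since $\conge_j^2(x) \ge 0$, the triangle inequality gives
\[
\abs{[\nabla f(x)]_j} = \abs{1 - \conge_j^2(x)} \le 1 + \conge_j^2(x) \le 1 + \norm[\infty]{\conge(x)}^2.
\]
Taking the maximum over $j$ yields $\norm[\infty]{\nabla f(x)} \le 1 + \norm[\infty]{\conge(x)}^2$.

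Then I substitute the bound \eqref{eq:cong-bound}, namely $\norm[\infty]{\conge(x)} \le (\min_i x_i)^{-1} (c_{A,b})^{1/2}$, and square it to obtain $\norm[\infty]{\conge(x)}^2 \le (\min_i x_i)^{-2} \, c_{A,b}$. Combining the two inequalities gives the claim. There is no real obstacle here, since both ingredients are established; the corollary is essentially a repackaging of Lemma \ref{lem:cong-bound} through the formula for the gradient.
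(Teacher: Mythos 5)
Your proof is correct and follows exactly the paper's intended argument, which is simply to combine Lemma \ref{lem:gradient} with the bound \eqref{eq:cong-bound} of Lemma \ref{lem:cong-bound}; you have merely spelled out the elementary step $\abs{1-\conge_j^2(x)} \le 1 + \conge_j^2(x)$ that the paper leaves implicit. No issues.
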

\begin{proof}
Combine Lemma \ref{lem:cong-bound} with Lemma \ref{lem:gradient}. 
\end{proof}

\begin{lemma}
\label{lem:hessian-ub}
If $x \in \pos$, then the largest eigenvalue of $\nabla^2 f(x)$ satisfies
\begin{equation}
\lmax(\nabla^2 f(x)) \le 2 \, (\min_{i=1,\ldots,m} x_i)^{-3} \cdot c_{A,b}.
\end{equation}
\end{lemma}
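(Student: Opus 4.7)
The plan is to start from the explicit formula for the Hessian given by Corollary \ref{cor:hessian}, namely $\nabla^2 f(x) = 2 (\conge(x) \conge(x)\tp) \odot T(x)$, and then use the standard identity that the Schur product of a rank-one PSD matrix $uu\tp$ with a matrix $M$ can be written as $\diag(u) \, M \, \diag(u)$. Applying this with $u = \conge(x)$ gives
\[
\nabla^2 f(x) \;=\; 2\, D(x)\, T(x)\, D(x), \qquad D(x) \defas \diag(\conge(x)).
\]

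Next I would bound $\lmax(D T D)$ in terms of $\lmax(T)$ and $\|\conge\|_\infty$. Since $D$ is diagonal with entries $\conge_j(x)$, we have $\|D\|_2 = \|\conge(x)\|_\infty$. Using submultiplicativity of the spectral norm (or equivalently invoking Theorem \ref{thm:ostrowski} on the congruent pair $T$ and $DTD$, after a limit argument if some coordinate of $\conge(x)$ vanishes), we get
\[
\lmax(D T D) \;\le\; \|D\|_2^2 \, \lmax(T) \;=\; \|\conge(x)\|_\infty^2 \, \lmax(T(x)).
\]

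Finally, I would combine the two bounds already established in this section: Lemma \ref{lem:spectrum-T} gives $\lmax(T(x)) \le (\min_i x_i)^{-1}$, and Lemma \ref{lem:cong-bound} (specifically \eqref{eq:cong-bound}) gives $\|\conge(x)\|_\infty^2 \le (\min_i x_i)^{-2} c_{A,b}$. Multiplying and inserting the factor of $2$ from the Hessian formula yields
\[
\lmax(\nabla^2 f(x)) \;\le\; 2 \cdot (\min_i x_i)^{-2} \, c_{A,b} \cdot (\min_i x_i)^{-1} \;=\; 2\,(\min_i x_i)^{-3}\, c_{A,b},
\]
as claimed.

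The main (mild) obstacle is the step bounding $\lmax(DTD)$: if one wants to apply Theorem \ref{thm:ostrowski} literally, $D$ must be nonsingular, i.e., all voltages $\conge_j(x)$ must be nonzero. Since this need not hold for interior points $x \in \pos$, I would either avoid Ostrowski here and argue directly via $\|DTD\|_2 \le \|D\|_2 \|T\|_2 \|D\|_2$ (valid without any invertibility assumption), or perturb $D$ to $D + \varepsilon I$ and take $\varepsilon \to 0$. Either route is routine, so the whole argument is essentially an assembly of the tools already developed in the section.
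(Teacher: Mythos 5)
Your proof is correct and follows essentially the same route as the paper's: rewrite the Hessian as $2\,D(x)T(x)D(x)$ via the Schur-product identity, bound the congruence factor by $\norm[\infty]{\conge(x)}^2$ using Lemma \ref{lem:cong-bound}, and combine with the eigenvalue bound on $T(x)$ from Lemma \ref{lem:spectrum-T}. Your observation that Theorem \ref{thm:ostrowski} formally requires $D(x)$ to be nonsingular is a fair point that the paper's proof glosses over, and your fallback via the spectral-norm inequality $\lmax(DTD)\le\norm[2]{D}^2\,\lmax(T)$ (valid since $DTD$ is positive semidefinite) disposes of it cleanly.
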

\begin{proof}
We can use the matrix identity $M \odot (z z\tp) = \diag{z} \cdot M \cdot \diag{z}$ to reexpress Corollary \ref{cor:hessian} as
\[
\nabla^2 f(x) = 2 D(x) T(x) D(x), 
\]
where $D(x) \defas \diag{\conge(x)}$. 
Hence, by Theorem \ref{thm:ostrowski}, the largest eigenvalue of $\nabla^2 f(x)$ satisfies
\begin{equation}
\label{eq:ostrowski-f}
\lmax(\nabla^2 f(x)) = 2 \, \theta \, \lmax(T(x))
\end{equation}
for some $\theta$ lying between the smallest and largest eigenvalues of $D(x)^2$. Since by Lemma \ref{lem:cong-bound}
\begin{equation}
\label{eq:bound-D}
\theta \le \lmax(D(x)^2) = \norm[\infty]{\conge(x)}^2 \le (\min_i x_i)^{-2} c_{A,b},
\end{equation}
combining \eqref{eq:ostrowski-f} and \eqref{eq:bound-D} with Lemma \ref{lem:spectrum-T} we get $\lmax(\nabla^2 f(x)) \le 2 (\min_i x_i)^{-3} c_{A,b}$. 
\end{proof}


\section{Equivalence between basis pursuit and dissipation minimization}
\label{sec:equiv}
In this section we prove the equivalence between basis pursuit and dissipation minimization. 
\begin{theorem}
\label{thm:equiv}
The value of the optimization problem
\begin{align*}
\text{\em minimize } & \norm[1]{\vec s} \tag{BP} \\
\text{\em subject to } & \vec A \vec s = \vec b, \quad \vec s \in \Real^m. 
\end{align*}
is equal to the value of the optimization problem
\begin{align*}
\text{\em minimize } & \frac{1}{2} \ones\tp x + \frac{1}{2} b\tp (AXA\tp)^{-1} b \tag{DM} \\
\text{\em subject to } & \vec x \in \pos. 
\end{align*}
\end{theorem}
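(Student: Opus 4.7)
Proof plan. The strategy is to establish each direction of the equality separately, exhibiting for each candidate of one problem a candidate of the other achieving a related objective value.

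For the inequality $\text{val(BP)} \le \text{val(DM)}$, given any $x \in \pos$ I would define $s(x) \defas X \conge(x) = X A\tp L^{-1}(x) b$. A direct computation gives $A s(x) = A X A\tp L^{-1}(x) b = L(x) L^{-1}(x) b = b$, so $s(x)$ is feasible for (BP). Using AM--GM in the form $|x_j d_j(x)| \le \tfrac12 (x_j + x_j d_j(x)^2)$ and summing over $j$,
\[
\norm[1]{s(x)} \;=\; \sum_j x_j |\conge_j(x)| \;\le\; \tfrac12 \ones\tp x \;+\; \tfrac12 \conge(x)\tp X \conge(x).
\]
Since $\conge(x)\tp X \conge(x) = b\tp L^{-1}(x) A X A\tp L^{-1}(x) b = b\tp L^{-1}(x) b$, this yields $\norm[1]{s(x)} \le \tfrac12 \tilde f(x)$, which is exactly the (DM) objective. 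Taking the infimum over $x \in \pos$ gives $\text{val(BP)} \le \text{val(DM)}$.

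For the reverse inequality $\text{val(DM)} \le \text{val(BP)}$, I would use a Thomson-type variational characterization of the Laplacian term: for any $x \in \pos$,
\[
b\tp L^{-1}(x) b \;=\; \min\bigl\{\, f\tp X^{-1} f \;:\; A f = b,\; f \in \Real^m \,\bigr\}.
\]
This follows by decomposing any feasible $f$ as $f = f^* + g$ with $f^* \defas X A\tp L^{-1}(x) b$ and $A g = 0$, checking that the cross term $(f^*)\tp X^{-1} g = b\tp L^{-1}(x) A g$ vanishes, and noting that the minimum is attained at $f^*$. Given any feasible $s$ for (BP), plugging $f = s$ into this inequality gives $b\tp L^{-1}(x) b \le \sum_j s_j^2 / x_j$ for every $x \in \pos$, hence
\[
\tilde f(x) \;\le\; \sum_{j=1}^m \Bigl( x_j + \frac{s_j^2}{x_j}\Bigr).
\]

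The last step is the choice of $x$ that turns the right-hand side into $2\norm[1]{s}$. By AM--GM, $x_j + s_j^2/x_j \ge 2|s_j|$ with equality at $x_j = |s_j|$. The subtlety — which I expect to be the only real obstacle — is that we need $x \in \pos$ strictly, while some $s_j$ may be zero. I would handle this by choosing, for an arbitrary $\eps > 0$, $x_j \defas |s_j|$ when $s_j \neq 0$ and $x_j \defas \eps$ otherwise; then $\tilde f(x) \le 2 \norm[1]{s} + m\eps$, so the infimum of $\tfrac12 \tilde f$ over $\pos$ is at most $\norm[1]{s}$. Taking the infimum over feasible $s$ gives $\text{val(DM)} \le \text{val(BP)}$, and combined with the first direction completes the proof. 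Existence of the minimum of the (DM) side follows already from Corollary \ref{cor:achieved} applied to $f$ (the values of $\tilde f$ on $\pos$ and $f$ on $\nonneg$ have the same infimum).
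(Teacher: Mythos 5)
Your proof is correct, and its first half is essentially the paper's: your induced point $s(x)=XA\tp L^{-1}(x)b$ is exactly the paper's $q(x)$, and your elementwise AM--GM bound $\sum_j x_j\abs{\conge_j(x)}\le\tfrac12\ones\tp x+\tfrac12 \conge(x)\tp X\conge(x)$ is a minor variant of the paper's Cauchy--Schwarz-then-AM--GM argument (Lemma \ref{lem:bp-leq-dm}); your Thomson-type characterization is the paper's Lemma \ref{lem:thomson} and Corollary \ref{cor:qXq}. Where you genuinely diverge is the limiting argument in the direction $\mathrm{val(DM)}\le\mathrm{val(BP)}$: the paper perturbs the \emph{solution}, replacing $s^*$ by $(1-\delta)s^*+\delta \sinit$ for a strictly nonzero-support solution $\sinit$ whose existence is assumption (A.2), and then sets $x=\abs{s(\delta)}$ (Lemmas \ref{lem:vicinity} and \ref{lem:dm-leq-bp}); you instead perturb the \emph{weights}, keeping $s$ fixed and setting $x_j=\eps$ on the coordinates where $s_j=0$, which works because those coordinates contribute nothing to $s\tp X^{-1}s$. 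Your version is cleaner on this point and does not invoke (A.2) at all for this direction; it is in spirit the same as the ``simpler proof'' the paper sketches at the end of Section \ref{sec:equiv} via the variational identity $\norm[1]{s}=\inf_{x\in\pos}\tfrac12\sum_j(s_j^2/x_j+x_j)$. What the paper's more laborious route buys is quantitative: Lemma \ref{lem:vicinity} and the explicit construction of feasible points in both directions are reused later (e.g.\ in Section \ref{sec:one-iteration} to convert (DM) iterates into (BP) iterates), whereas your argument only establishes equality of the optimal values.
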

We call (DM) the \emph{dissipation minimization problem} associated to $A$ and $b$. 
Note that the objective in (DM) is exactly $\ftilde(x)/2$, hence by \eqref{eq:liminf} the minimum of (DM) equals the minimum of $f(x)/2$ over $\nonneg$; the fact that this minimum is achieved is guaranteed by Corollary \ref{cor:achieved}.

\begin{definition}
Let $x > 0$. The \emph{solution induced by} $x$ is the vector $q(x) \defas X A\tp L^{-1}(x) b$. 
\end{definition}
The term ``solution'' is justified by the fact that $A q(x) = L L^{-1} b = b$. Induced solutions have the following simple characterization.

\mynote{%
The following lemma gives an alternative expression for $f$. 
\begin{lemma}
\[
f(\vec x) = \innprod{\vec x}{\vec x}_{\vec X^{-1}} + \innprod{\vec q(x)}{\vec q(x)}_{\vec X^{-1}}.
\]
\end{lemma}
\begin{proof}
Since $\vec A \vec q = \vec b$, and $q = X A\tp L^{-1} b$, we have $
\vec b\tp \vec L^{-1}(\vec x) \vec b = (\vec q\tp \vec A\tp) \vec L^{-1}(\vec x) b = q\tp X^{-1} q. $
\end{proof}
}%

\begin{lemma}
\label{lem:thomson}
Let $x \in \pos$. The solution induced by $x$, $q(x)$, equals the unique optimal solution to the quadratic optimization problem:   
\begin{align}
\label{eq:thomson}
\tag{QP$_x$} \text{\em minimize } & \vec s\tp \vec X^{-1} \vec s \\
\notag \text{\em subject to } & \vec A \vec s = \vec b, \quad s \in \Real^m. 
\end{align}
\end{lemma}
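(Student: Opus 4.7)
The plan is to apply the KKT optimality conditions, which are necessary and sufficient here: the objective $s \mapsto s\tp X^{-1} s$ is strictly convex on $\Real^m$ since $X^{-1}$ is positive definite for $x \in \pos$, and the feasible set $\{s : As=b\}$ is affine and nonempty (it contains $\sinit$ by assumption (A.2)), so any feasible stationary point of the Lagrangian is automatically the unique global minimizer.

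First I would form the Lagrangian $\lagr(s,\lambda) = s\tp X^{-1} s - \lambda\tp (As - b)$ and set $\nabla_s \lagr = 2 X^{-1} s - A\tp \lambda = 0$, obtaining $s = \tfrac{1}{2} X A\tp \lambda$. Substituting into the constraint $As = b$ yields $\tfrac{1}{2} A X A\tp \lambda = \tfrac{1}{2} L(x)\lambda = b$, and since $L(x)$ is invertible by Proposition \ref{prop:pd-of-L}, I can solve $\lambda = 2 L^{-1}(x) b$, and then $s = X A\tp L^{-1}(x) b = q(x)$. Feasibility is verified at once from $A q(x) = A X A\tp L^{-1}(x) b = L(x) L^{-1}(x) b = b$, so $q(x)$ is the unique KKT point and hence the unique optimum.

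A more direct variant, which I would prefer for conciseness, avoids Lagrange multipliers altogether: write any feasible $s$ as $s = q(x) + r$ with $A r = 0$ (such a decomposition exists since $q(x)$ is itself feasible), and expand
\[
s\tp X^{-1} s \;=\; q(x)\tp X^{-1} q(x) \,+\, 2\, q(x)\tp X^{-1} r \,+\, r\tp X^{-1} r.
\]
The cross term reduces to $q(x)\tp X^{-1} r = b\tp L^{-1}(x) A r$, which vanishes because $r$ lies in the kernel of $A$. Hence $s\tp X^{-1} s = q(x)\tp X^{-1} q(x) + r\tp X^{-1} r \ge q(x)\tp X^{-1} q(x)$ with equality iff $r = 0$, by positive-definiteness of $X^{-1}$. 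No substantive obstacle arises: the argument is a standard equality-constrained quadratic programming calculation and uses only Proposition \ref{prop:pd-of-L} and the invertibility of $X$ for $x \in \pos$.
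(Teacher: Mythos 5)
Your proposal is correct, and the ``more direct variant'' you prefer is essentially identical to the paper's own proof: the paper also decomposes a feasible $s$ as $q(x)+r$ with $Ar=0$, notes that the cross term $r\tp X^{-1} q(x) = (Ar)\tp L^{-1}(x) b$ vanishes, and concludes by strict convexity. The preliminary KKT computation you include is a valid (and slightly longer) alternative that additionally \emph{derives} the formula for $q(x)$ rather than merely verifying its optimality, but it is not needed.
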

\begin{proof}
This lemma is a straightforward generalization of \emph{Thomson's principle} \cite[Chapter 9]{Bollobas:1998} from electrical network theory. 
We adapt an existing proof \cite[Lemma 3]{Bonifaci:2017:b} to the notation used in this paper. 
Since the objective function in \eqref{eq:thomson} is strictly convex, the problem has a unique optimal solution. 
Consider any solution $\vec s$, and let $\vec r= \vec s - \vec q(x)$. Then $\vec A \vec r = \vec b - \vec b = \vec 0$ and hence
$$
\vec s\tp \vec X^{-1} \vec s = (\vec q + \vec r)\tp \vec X^{-1} (\vec q + \vec r) =
\vec q\tp \vec X^{-1} \vec q + 2 \vec r\tp \vec X^{-1} \vec q + \vec r\tp \vec X^{-1} \vec r 
\ge 
\vec q\tp \vec X^{-1} \vec q,  
$$
since $\vec r\tp \vec X^{-1} \vec r \ge 0$ and $\vec r\tp \vec X^{-1} \vec q = \vec r\tp \vec A\tp \vec L^{-1} b = (\vec A \vec r)\tp \vec L^{-1} b = 0$. Therefore, the objective function value of any solution $\vec s$ to (QP$_x$) is at least as large as the objective function value of the solution $\vec q(x)$. 
\end{proof}

The value of (QP$_x$) is, in fact, the Laplacian potential $b\tp L^{-1}(x) b$. 
\begin{corollary}
\label{cor:qXq}
The minimum of (QP$_x$) equals $q(x)\tp X^{-1} q(x) = b\tp L^{-1}(x) b$. 
\end{corollary}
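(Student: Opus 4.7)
The plan is to combine Lemma \ref{lem:thomson}, which already identifies $q(x)$ as the unique minimizer of (QP$_x$), with a direct algebraic computation of the objective value $q(x)\tp X^{-1} q(x)$. There is essentially no obstacle here: substituting the defining formula $q(x) = X A\tp L^{-1}(x) b$ into $q(x)\tp X^{-1} q(x)$ produces $b\tp L^{-1} A X X^{-1} X A\tp L^{-1} b$, and the middle cluster $A X A\tp$ collapses to $L(x)$ by definition, which cancels with the two factors of $L^{-1}(x)$ on either side, leaving $b\tp L^{-1}(x) b$.

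Concretely, I would write a two-line proof. First, by Lemma \ref{lem:thomson} the minimum of (QP$_x$) equals the objective value at $q(x)$, namely $q(x)\tp X^{-1} q(x)$. Second, using $q(x) = X A\tp L^{-1}(x) b$ together with the symmetry of $L^{-1}(x)$ and the identity $A X A\tp = L(x)$, one computes
\begin{align*}
q(x)\tp X^{-1} q(x) &= b\tp L^{-1}(x) A X \cdot X^{-1} \cdot X A\tp L^{-1}(x) b \\
&= b\tp L^{-1}(x) \, L(x) \, L^{-1}(x) b \\
&= b\tp L^{-1}(x) b,
\end{align*}
which yields both equalities in the statement.

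Since every manipulation used is a definitional substitution or an elementary cancellation, no step is delicate; the only thing worth mentioning is that $X^{-1}$ is well defined precisely because $x \in \pos$, which is also what ensures $L(x)$ is invertible via Proposition \ref{prop:pd-of-L}, so all the intermediate expressions are meaningful.
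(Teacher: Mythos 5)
Your proposal is correct and follows essentially the same route as the paper: invoke Lemma \ref{lem:thomson} to identify the minimum with $q(x)\tp X^{-1} q(x)$, then substitute $q(x) = X A\tp L^{-1}(x) b$ and cancel $A X A\tp = L(x)$ against the two factors of $L^{-1}(x)$. Your remark about $x \in \pos$ guaranteeing invertibility of $X$ and $L(x)$ is a nice touch but changes nothing of substance.
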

\begin{proof}
We already proved that the minimum of \textrm{(QP$_x$)} is $q(x)\tp X^{-1} q(x)$. Substituting the definition of $q(x)$, 
\[
q\tp X^{-1} q = (b\tp L^{-1} A\tp X) X^{-1} (X A\tp L^{-1} b) = b\tp L^{-1} L L^{-1} b = b\tp L^{-1} b. \qedhere
\]
\end{proof}

\begin{lemma}
\label{lem:bp-leq-dm}
For any $x > 0$, $q(x) \in \Real^m$ is such that $Aq=b$ and $\norm[1]{q(x)} \le f(x)/2$. Thus, the value of (BP) is at most that of (DM). 
\end{lemma}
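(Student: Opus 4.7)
The plan is to verify the two assertions of the lemma separately. For the feasibility of $q(x)$, I would just unwind the definitions:
\[
A q(x) = A X A\tp L^{-1}(x) b = L(x) L^{-1}(x) b = b,
\]
so $q(x)$ is feasible for (BP).

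For the norm bound, my plan is to exploit the identity $q(x)\tp X^{-1} q(x) = b\tp L^{-1}(x) b$ that has just been recorded in Corollary~\ref{cor:qXq}. Writing each coordinate as $|q_j(x)| = \sqrt{x_j}\cdot |q_j(x)|/\sqrt{x_j}$ and applying Cauchy--Schwarz over $j=1,\ldots,m$ gives
\[
\norm[1]{q(x)} \le \bigl(\ones\tp x\bigr)^{1/2}\bigl(q(x)\tp X^{-1} q(x)\bigr)^{1/2} = \bigl(\ones\tp x\bigr)^{1/2}\bigl(b\tp L^{-1}(x) b\bigr)^{1/2},
\]
where the equality is Corollary~\ref{cor:qXq}. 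An application of the AM--GM inequality to the product of the two square roots then yields
\[
\norm[1]{q(x)} \le \tfrac{1}{2}\bigl(\ones\tp x + b\tp L^{-1}(x) b\bigr) = \tfrac{1}{2}\ftilde(x) = \tfrac{1}{2} f(x),
\]
the last equality holding because $x \in \pos$, where $f$ and $\ftilde$ coincide.

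For the final clause, I would argue that since $q(x)$ is feasible for (BP) for every $x \in \pos$, the value of (BP) is bounded above by $\norm[1]{q(x)} \le f(x)/2$ for every such $x$. Taking the infimum over $x > 0$ and recalling from the paragraph following Theorem~\ref{thm:equiv} that the value of (DM) equals $\inf_{x>0} \ftilde(x)/2$ yields value$($BP$) \le $ value$($DM$)$.

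I do not anticipate a genuine obstacle here: the argument is essentially one application of Cauchy--Schwarz plus one application of AM--GM, and its only nonroutine ingredient is the identification of $q(x)\tp X^{-1} q(x)$ with the Laplacian potential, which Corollary~\ref{cor:qXq} has just established. An alternative route, yielding exactly the same bound, is to bound each coordinate separately by AM--GM as $|q_j(x)| = \sqrt{x_j \cdot (x_j d_j(x)^2)} \le \tfrac{1}{2}(x_j + x_j d_j(x)^2)$ and then sum, observing that $\sum_j x_j d_j(x)^2 = d(x)\tp X d(x) = b\tp L^{-1}(x) b$; the choice between the two presentations is purely a matter of taste.
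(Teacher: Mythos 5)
Your proposal is correct and matches the paper's proof essentially step for step: feasibility by unwinding $Aq(x)=L(x)L^{-1}(x)b=b$, then Cauchy--Schwarz in the $X^{-1}$-weighted inner product combined with Corollary~\ref{cor:qXq}, followed by AM--GM. The coordinatewise variant you mention at the end is just a rephrasing of the same computation, so there is nothing to add.
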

\begin{proof}
For any $\vec x \in \pos$, consider its induced solution $q(x) = \vec X \vec A\tp \vec L(\vec x)^{-1} \vec b$. 
We already observed that $q(x)$ is feasible for (BP). Moreover, we can bound:  
\begin{align*}
\norm[1]{\vec q(x)} &= \vec x\tp \vec X^{-1} \abs{\vec q}  \\
& \le (\vec x\tp \vec X^{-1} \vec x)^{1/2} \, \cdot  (\vec q\tp \vec X^{-1} \vec q)^{1/2} & \\ 
& = (\ones\tp \vec x)^{1/2} \,\cdot (\vec b\tp \vec L^{-1}(\vec x) \vec b)^{1/2}  & \text{ (by Corollary \ref{cor:qXq})} \\
&\le \frac{1}{2} \ones\tp \vec x + \frac{1}{2} \vec b\tp L^{-1}(x) \vec b & \\ 
&= \frac{1}{2} f(\vec x), 
\end{align*}
where the first upper bound follows from the Cauchy-Schwarz inequality, and the second from the Arithmetic Mean-Geometric Mean inequality. 
\end{proof}

To prove the converse of Lemma \ref{lem:bp-leq-dm}, we develop an intermediate lemma that relates the value of an optimal solution $s^*$ of (BP) to the dissipation value of a vector $x$ such that $x=\abs{s}$ with $s$ sufficiently close to $s^*$. 
\begin{lemma}
\label{lem:vicinity}
Let $s \in \Real^m$, $\eps \in (0,1)$ be such that $\vec A \vec s = \vec b$, $s_j \neq 0$ and $(1-\eps) \abs{s^*_j} \le \abs{s_j} \le \abs{s^*_j} + \eps/m$ for some $s^*$ such that $A s^* = b$ and each $j=1,\ldots,m$. Then for $x=|s|$, 
\begin{equation}
\label{eq:vicinity-bound}
\frac{1}{2} f(x) \le \frac{\eps}{2} + \frac{1}{2} \left(1+\frac{1}{1-\eps} \right) \norm[1]{\vec s^*}. 
\end{equation}
\end{lemma}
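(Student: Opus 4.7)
The plan is to bound each of the two terms in $f(x)/2 = \frac{1}{2}\ones\tp x + \frac{1}{2}b\tp L^{-1}(x) b$ separately, using the hypothesis on $s$ for the linear term and Thomson's principle (Lemma \ref{lem:thomson} / Corollary \ref{cor:qXq}) for the Laplacian term.

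For the linear term I would simply sum the pointwise upper bound $|s_j| \le |s^*_j| + \eps/m$ over $j = 1,\ldots,m$, which immediately gives
\[
\ones\tp x = \sum_{j=1}^m |s_j| \le \norm[1]{s^*} + \eps.
\]
For the Laplacian term, the key observation is that by Corollary \ref{cor:qXq} we have $b\tp L^{-1}(x) b = \min\{s'{}\tp X^{-1} s' : A s' = b\}$, so plugging in the feasible vector $s^*$ yields $b\tp L^{-1}(x) b \le \sum_j (s^*_j)^2 / x_j = \sum_j (s^*_j)^2 / |s_j|$. Using the lower bound $|s_j| \ge (1-\eps)|s^*_j|$ on each coordinate where $s^*_j \neq 0$ (for coordinates with $s^*_j = 0$ the term just vanishes, since $s_j \neq 0$ by assumption), this reduces to
\[
b\tp L^{-1}(x) b \le \frac{1}{1-\eps}\sum_{j : s^*_j \neq 0} |s^*_j| = \frac{\norm[1]{s^*}}{1-\eps}.
\]

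Adding the two bounds with the correct $1/2$ prefactors gives exactly the right-hand side of \eqref{eq:vicinity-bound}. The only subtlety worth spelling out is the treatment of the indices where $s^*_j = 0$: the hypothesis $s_j \neq 0$ keeps $x_j$ strictly positive so that $L(x)$ is positive definite (Proposition \ref{prop:pd-of-L}) and the quadratic $\sum_j (s^*_j)^2/x_j$ is well-defined. I do not expect any real obstacle here; the lemma is essentially a direct calculation once Thomson's principle is in hand, and the proof should comfortably fit in a few lines.
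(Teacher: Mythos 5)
Your proposal is correct and matches the paper's own proof essentially line for line: the linear term is bounded by summing the pointwise upper bound, and the Laplacian term is bounded via Thomson's principle (Lemma \ref{lem:thomson}/Corollary \ref{cor:qXq}) by plugging in $s^*$ and applying the coordinatewise lower bound $\abs{s_j}\ge(1-\eps)\abs{s^*_j}$. Your explicit remark on the indices with $s^*_j=0$ is a minor point the paper leaves implicit, but it introduces no difference in substance.
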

\begin{proof}
On one hand, by the assumed upper bound $\abs{s_j} \le |s^*_j| + \eps/m$, trivially
\begin{equation}
\label{eq:f-norm-bound}
\ones\tp x = 
\norm[1]{\vec s} \le \norm[1]{\vec s^*} + \eps. 
\end{equation}

On the other hand, consider the solution $q(x)$ induced by $x$ and recall that $\vec q(\vec x)$ is feasible for (BP), since $\vec A \vec q = \vec b$, and optimal for \eqref{eq:thomson}. By the assumed lower bound $\abs{\vec s_j} \ge (1-\eps) \abs{s^*_j}$, and by Lemma \ref{lem:thomson}, 
\begin{align}
\vec b\tp \vec L^{-1}(\vec x) \vec b & = \vec q\tp \vec X^{-1} \vec q  \label{eq:lapl-bound}\\
& \le \vec s^{*\top} \vec X^{-1} \vec s^* = \sum_{j=1}^m \frac{1}{\abs{s_j}} (s_j^*)^2 \notag\\
& \le (1-\eps)^{-1} \sum_j \abs{s^*_j} = (1-\eps)^{-1}\norm[1]{\vec s^*},  \notag
\end{align}
where the first upper bound follows from the fact that $s^*$ is a solution to \eqref{eq:thomson}, and the second follows from the hypothesis. 
Combining \eqref{eq:f-norm-bound} and \eqref{eq:lapl-bound}, we get 
\[
\frac{1}{2} f(\vec x) \le \frac{1}{2} \norm[1]{\vec s^*} + \frac{\eps}{2} + \frac{1}{2} (1-\eps)^{-1} \norm[1]{\vec s^*}. \qedhere
\]
\end{proof}

\begin{lemma}
\label{lem:dm-leq-bp}
The value of (DM) is at most that of (BP). 
\end{lemma}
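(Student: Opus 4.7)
The plan is to exhibit, for every $\eps > 0$, a point $x \in \pos$ with $f(x)/2 \le \norm[1]{s^*} + O(\eps)$, where $s^*$ is an optimal solution of (BP). Combined with Lemma \ref{lem:bp-leq-dm}, this gives the equality in Theorem \ref{thm:equiv}.

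First I would use the assumption (A.2) to pick a solution $s'$ of $As = b$ with $s'_j \ne 0$ for every $j$, and then interpolate with $s^*$: define
\[
s_\delta \defas (1-\delta) s^* + \delta s', \qquad \delta \in (0,1).
\]
Linearity of $A$ gives $A s_\delta = b$. For $j$ with $s^*_j = 0$ one has $s_{\delta,j} = \delta s'_j \ne 0$; for $j$ with $s^*_j \ne 0$, continuity guarantees $s_{\delta,j} \ne 0$ whenever $\delta$ is small. Hence $s_\delta$ has full support for all sufficiently small $\delta > 0$.

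Next I would verify the hypotheses of Lemma \ref{lem:vicinity} for the candidate $x_\delta \defas \abs{s_\delta}$. The upper bound is routine: by the triangle inequality,
\[
\abs{s_{\delta,j}} \le \abs{s^*_j} + \delta \, \abs{s'_j} \le \abs{s^*_j} + \delta \norm[\infty]{s'},
\]
so any $\delta \le \eps/(m \norm[\infty]{s'})$ ensures $\abs{s_{\delta,j}} \le \abs{s^*_j} + \eps/m$. For the lower bound there are two cases. If $s^*_j = 0$, then trivially $\abs{s_{\delta,j}} \ge (1-\eps)\abs{s^*_j} = 0$. If $s^*_j \ne 0$, the triangle inequality gives $\abs{s_{\delta,j}} \ge (1-\delta)\abs{s^*_j} - \delta \abs{s'_j}$, which is at least $(1-\eps)\abs{s^*_j}$ whenever $\delta(\abs{s^*_j}+\abs{s'_j}) \le \eps \abs{s^*_j}$. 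Taking
\[
\delta \defas \min\!\left( \frac{\eps}{m \norm[\infty]{s'}}, \; \min_{j:\, s^*_j \neq 0} \frac{\eps \abs{s^*_j}}{\abs{s^*_j}+\abs{s'_j}} \right),
\]
which is strictly positive for any $\eps > 0$, simultaneously satisfies all of these requirements. (The case $s^* = 0$, equivalent to $b = 0$, is trivial: both problems have value $0$, as can be seen by driving $x \to 0$ in (DM).)

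Applying Lemma \ref{lem:vicinity} to $s_\delta$ and $x_\delta$ yields
\[
\frac{1}{2} f(x_\delta) \le \frac{\eps}{2} + \frac{1}{2}\left(1 + \frac{1}{1-\eps}\right)\norm[1]{s^*}.
\]
Since the right-hand side tends to $\norm[1]{s^*}$ as $\eps \to 0^+$, the infimum of $f(x)/2$ over $\pos$ is at most $\norm[1]{s^*}$, which is exactly the value of (BP). By the remark following Theorem \ref{thm:equiv} (using \eqref{eq:liminf} and Corollary \ref{cor:achieved}), this infimum coincides with the minimum of (DM), so (DM) $\le$ (BP) as claimed. The only nontrivial step is item (5), the quantitative choice of $\delta$; it is a short calculation but must be done carefully because the lower-bound condition of Lemma \ref{lem:vicinity} degenerates at indices where $s^*_j = 0$.
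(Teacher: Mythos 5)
Your proof is correct and follows essentially the same route as the paper: interpolate $s_\delta=(1-\delta)s^*+\delta s'$ using the full-support solution guaranteed by (A.2), verify the hypotheses of Lemma \ref{lem:vicinity} for small $\delta$, and let $\eps\to 0$. The only difference is that you make the admissible choice of $\delta$ explicit, where the paper simply asserts that a sufficiently small $\delta$ works.
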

\begin{proof}
Consider an optimal solution $\vec s^* \in \Real^m$ to (BP).
%
%
Let $\sinit \in \Real^m$ be a solution to $\vec A \vec s = \vec b$ such that $\sinit_j \neq 0$ for all $j=1,\ldots,m$ (such an $\sinit$ exists by assumption (A.2)). 
For any $\delta \in (0,1)$, let $\vec s(\delta) \defas (1-\delta) \vec s^* + \delta \vec \sinit$ and $\vec x(\delta) \defas \abs{\vec s(\delta)} > 0$. For any $\eps \in (0,1)$ we can ensure that the hypotheses of Lemma \ref{lem:vicinity} are satisfied by choosing a small enough $\delta>0$. For such a value of $\delta$, Lemma \ref{lem:vicinity} yields
\begin{equation}
\label{eq:vicinity-bound-sopt}
\frac{1}{2} f(\vec x(\delta)) \le \frac{\eps}{2} + \frac{1}{2} \left(1+\frac{1}{1-\eps} \right) \norm[1]{\vec s^*}. 
\end{equation}
As $\eps$ can be chosen arbitrarily small, and the right-hand side of \eqref{eq:vicinity-bound-sopt} approaches $\norm[1]{\vec s^*}$ as $\eps \to 0$, we obtain the claim. 
\end{proof}


This concludes the proof of Theorem \ref{thm:equiv}. 
Not only are the optimal values of (BP) and (DM) the same, but one can bound the suboptimality of any feasible point of (BP) in terms of the dissipation value of a corresponding vector. 

\begin{theorem}
\label{thm:gap}
Let $s \in \Real^m$ be a feasible point of (BP) such that $s_j\neq 0$ for all $j=1,\ldots,m$, and let $x = \abs{s}$, $\rho(x) \defas \norm[\infty]{\conge(\vec x)}$. 
The quantity 
$ \left( 1 + \rho^{-1}(x) \right) \norm[1]{s} - \rho^{-1}(x) \cdot  f(x) $
is an upper bound on the suboptimality of $s$. 
\end{theorem}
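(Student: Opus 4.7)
The plan is to rewrite the target inequality in a more tractable form and then derive the required bound via a direct manipulation using the definition of the voltage vector together with Hölder's inequality.

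First I would reduce the claim to showing that
\[
b\tp L^{-1}(x) b \le \rho(x)\,\norm[1]{s^*},
\]
where $s^*$ is any optimal solution of (BP). Indeed, since $x = \abs{s}$ we have $\ones\tp x = \norm[1]{s}$, so the definition of $f$ gives $f(x) = \norm[1]{s} + b\tp L^{-1}(x)b$. Assuming the displayed inequality for the moment, rearranging yields
\[
\norm[1]{s^*} \ge \rho^{-1}(x)\bigl(f(x) - \norm[1]{s}\bigr),
\]
so the suboptimality satisfies
\[
\norm[1]{s} - \norm[1]{s^*} \le \norm[1]{s} - \rho^{-1}(x)\bigl(f(x) - \norm[1]{s}\bigr) = \bigl(1 + \rho^{-1}(x)\bigr)\norm[1]{s} - \rho^{-1}(x)\,f(x),
\]
which is exactly the claimed upper bound.

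To establish the key inequality $b\tp L^{-1}(x) b \le \rho(x)\,\norm[1]{s^*}$, I would use the fact that $s^*$ is feasible, so $b = A s^*$, together with Definition~\ref{def:conge} of the voltage vector $\conge(x) = A\tp L^{-1}(x) b$. Substituting one copy of $b$ gives
\[
b\tp L^{-1}(x) b = b\tp L^{-1}(x) A s^* = \conge(x)\tp s^* = \sum_{j=1}^m \conge_j(x)\, s^*_j,
\]
and Hölder's inequality between $\ell_\infty$ and $\ell_1$ then yields
\[
\sum_{j=1}^m \conge_j(x)\,s^*_j \le \norm[\infty]{\conge(x)}\cdot\norm[1]{s^*} = \rho(x)\,\norm[1]{s^*},
\]
as required. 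The hypothesis $s_j \ne 0$ for all $j$ ensures $x \in \pos$, so that $L(x)$ is invertible (Proposition~\ref{prop:pd-of-L}) and $\conge(x)$, $f(x)$, and $\rho(x)$ are all well-defined.

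There is no real obstacle here; the only mild subtlety is recognizing that the right-hand side expression in the theorem is just an algebraic rewriting of an upper bound on $\norm[1]{s^*}$, so the work is to produce a lower bound on $\norm[1]{s^*}$ in terms of $f(x)$ and $\norm[1]{s}$, which is exactly what the Hölder step above delivers.
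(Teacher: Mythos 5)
Your proof is correct. The paper establishes the same bound by an explicit LP duality argument: it writes (BP) as a linear program, exhibits its dual, and constructs a dual feasible point $\nu = \rho^{-1}(x)(A X A\tp)^{-1} b$ (together with $\lambda = (\ones - A\tp \nu)/2$, $\mu = (\ones + A\tp \nu)/2$), so that weak duality yields $\norm[1]{s^*} \ge b\tp \nu = \rho^{-1}(x)\, b\tp L^{-1}(x) b$, and the claimed quantity is the resulting duality gap. Your key inequality $b\tp L^{-1}(x) b \le \rho(x)\,\norm[1]{s^*}$ is exactly that weak-duality inequality, and your H\"older step $\conge(x)\tp s^* \le \norm[\infty]{\conge(x)}\,\norm[1]{s^*}$ is precisely the computation hiding inside the paper's dual feasibility condition $\norm[\infty]{A\tp \nu} \le 1$. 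So the two arguments are mathematically equivalent, but yours is more elementary and self-contained: it uses only the symmetry of $L^{-1}(x)$, the feasibility $A s^* = b$, and H\"older, and never needs to set up the dual program. What the paper's formulation buys is an explicit dual certificate, which it can reuse (Section \ref{sec:one-iteration}) as a computable duality-gap exit criterion for the iterative schemes; your version produces the identical numerical bound, so nothing is lost in substance.
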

\begin{proof}
Consider the following linear formulation of (BP) (left) and its dual (right):

\begin{minipage}[c]{0.45\textwidth}%
\begin{align*}
\text{minimize } & \ones\tp \vec x \\
\text{subject to } & \vec x + \vec s \ge 0 \\
& \vec x - \vec s \ge 0 \\
& \vec A \vec s = \vec b \\
& \vec x, \vec s \in \Real^m. 
\end{align*}
\end{minipage}
\begin{minipage}[c]{0.45\textwidth}%
\begin{align*}
\text{maximize } & \vec b\tp \vec \nu \\
\text{subject to } & \vec \lambda + \vec \mu = \ones \\
& \vec \lambda - \vec \mu + \vec A\tp \vec \nu = \vec 0 \\
& \vec \lambda, \vec \mu \ge \vec 0 \\
& \vec \lambda, \vec \mu \in \Real^m, \vec \nu \in \Real^n. 
\end{align*}
\end{minipage}

\medskip
Given any solution $s$ to (BP) such that $x = \abs{s}>0$, let us take
\begin{align*}
\nu &= \rho^{-1}(x) (\vec A \vec X \vec A\tp)^{-1} \vec b, \\
\lambda &= (\ones - \vec A\tp \vec \nu)/2, \\
\mu &= (\ones + \vec A\tp \vec \nu)/2. 
\end{align*}

Then $\norm[\infty]{A\tp \vec \nu} \le 1$ by definition of $\rho(x)$; moreover, $\lambda + \mu = \ones$, $\lambda - \mu + \vec A\tp \vec \nu = 0$, and $\lambda, \mu \ge 0$. Thus, $(\vec x, \vec s)$ is a primal feasible solution, $(\vec \lambda, \vec \mu, \vec \nu)$ is a dual feasible solution, and by weak duality
\[
\ones\tp \vec x \ge \vec b\tp \vec \nu = \rho^{-1}(x) \vec b\tp (\vec A \vec X \vec A\tp)^{-1} \vec b. 
\]
This implies a duality gap of 
\begin{align*}
\ones\tp x - \rho^{-1}(x) b\tp L^{-1}(x) b &= \ones\tp x - \rho^{-1}(x) (f(x) - \ones\tp x) \\
&= \left( 1+\rho^{-1}(x) \right) \norm[1]{s} - 2\rho^{-1}(x) \cdot \frac{1}{2} f(x). \qedhere
\end{align*}
\end{proof}

\mynote{%
As strong duality also holds, a corollary is that as $x \to x^*$, $\norm[\infty]{\conge(x)} \to \ones$. 
}


We close this section by observing that a simpler proof of Theorem \ref{thm:equiv} can be obtained by the following quadratic variational formulation of the $\ell_1$-norm: for any $s \in \Real^m$,
\[
\norm[1]{s} = \inf_{x\in \pos} \frac{1}{2} \sum_{j=1}^m \left( \frac{s_j^2}{x_j} + x_j \right),
\]
see, for example, Bach et al.~\cite[Section 1.4.2]{Bach:2012}. Therefore 
\begin{align*}
\min_{s \in \Real^m \atop As=b} \norm[1]{s} &= \min_{s \in \Real^m \atop As=b} \inf_{x \in \pos} \frac{1}{2} \left(  \frac{s_j^2}{x_j} + x_j \right) \\
&= \inf_{x \in \pos} \left( \frac{1}{2} \left( \min_{s \in \Real^m \atop As=b} s\tp X^{-1} s \right) + \frac{1}{2} \ones\tp x \right) \\
&= \inf_{x \in \pos} \left( \frac{1}{2} \, b\tp L^{-1}(x) b + \frac{1}{2} \, \ones\tp x \right),
\end{align*}
where the last identity follows from Corollary \ref{cor:qXq}. 
However, the full strength of Lemma \ref{lem:bp-leq-dm} and Lemma \ref{lem:dm-leq-bp} is crucial to be able to constructively transform feasible points for (DM) into feasible points for (BP) and vice versa. 

\section{Continuous dynamics for dissipation minimization}
\label{sec:ode}
Theorem \ref{thm:equiv} readily suggests an approach to the solution of the basis pursuit problem. 
Namely, the solution of the non-smooth, equality constrained formulation (BP) is reduced to the solution of the differentiable formulation (DM) on the positive orthant. 

\emph{Mirror descent dynamics.}
To solve (DM), it is natural to adopt methods for differentiable constrained optimization that are designed for simple constraints. 
Consider first the following set of ordinary differential equations, aimed at solving
$\inf \, \{ f(\vec x) \,|\, \vec x > 0\}$: 
\begin{equation}
\label{eq:ode-evol}
\dot{x}_j = -x_j \frac{\partial f(x)}{\partial x_j}, \qquad j=1,\ldots,m,
\end{equation}
with initial condition $x(0) = x^0$ for some $x^0 > 0$. 
When $f$ is the dissipation potential, by Lemma \ref{lem:gradient} this yields the explicit dynamics
\begin{equation}
\label{eq:explicit}
\dot{x}_j = x_j(\conge^2_j(x) - 1) = x_j( (a_j\tp (AXA\tp)^{-1} b)^2 - 1), \qquad j=1,\ldots,m. 
\end{equation}


The dynamical system \eqref{eq:ode-evol} is a nonlinear Lotka-Volterra type system of differential equations, of a kind that is common in population dynamics \cite{Hofbauer:1998}. It is also an example of a \emph{Hessian gradient flow} \cite{Alvarez:2004}: it can be expressed in the form
\begin{equation}
\label{eq:evol}
\dot{\vec x} = - \vec H^{-1}(\vec x) \nabla f(\vec x)
\end{equation}
where $\vec H(\vec x) = \nabla^2 h(\vec x)$ is the Hessian of a convex function $h$; namely, here $\vec H(\vec x) = \vec X^{-1}$, and $h: \pos \to \Real$ is the \emph{negative entropy} function
\begin{equation}
\label{eq:negentropy}
h(\vec x) \defas \sum_{j=1}^m x_j \ln x_j - \sum_{j=1}^m x_j. 
\end{equation}
System \eqref{eq:evol} can also be expressed as
$
\frac{d}{dt} \frac{\partial h(\vec x)}{\partial x_j} = -\frac{\partial f(\vec x)}{\partial x_j}, j=1,\ldots,m,
$
or more succinctly, 
\begin{equation}
\label{eq:mdf}
\frac{d}{dt} \nabla h(\vec x) = - \nabla f(\vec x), 
\end{equation}
which is known as the \emph{mirror descent dynamics} or \emph{natural gradient flow} \cite{Nemirovski:1983,Amari:2016}. 
The well-posedness of \eqref{eq:evol} has been considered, for example, in \cite{Alvarez:2004}. 
A dynamics formally similar to \eqref{eq:explicit} is the \emph{Physarum dynamics} \cite{Bonifaci:2012,Straszak:2016:soda,Straszak:2016:irls,Becker:2017}, namely, 
\begin{equation}
\label{eq:physarum}
\dot{x}_j = x_j(\abs{\conge_j(x)} - 1) = x_j( \abs{a_j\tp (AXA\tp)^{-1} b} - 1), \qquad j=1,\ldots,m.  
\end{equation}
Differently from \eqref{eq:explicit}, the dynamics \eqref{eq:physarum} is \emph{not} a gradient flow, that is, there is no function $f$ that allows to write the dynamics in the form \eqref{eq:evol} or \eqref{eq:mdf} (with $h$ the negative entropy). 

\emph{Convergence of the dynamics.}
The fact that the solution of the mirror descent dynamics \eqref{eq:evol} converges to a minimizer of $f$ with rate $1/t$ is a well-known result; see, for example, \cite{Alvarez:2004,Wilson:2018}. 
We include a streamlined proof for completeness. 


\begin{lemma}
\label{lem:f-decreasing}
The values $f(x(t))$ with $x(t)$ given by \eqref{eq:ode-evol} are nonincreasing in $t$. 
\end{lemma}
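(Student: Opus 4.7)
The plan is to differentiate $f(x(t))$ along the trajectory using the chain rule and show the resulting expression is manifestly nonpositive. Concretely, under the smoothness of $f$ on $\pos$ (Lemma \ref{thm:convexity}), for any solution $x(t)$ of \eqref{eq:ode-evol} one has
\[
\frac{d}{dt} f(x(t)) = \sum_{j=1}^m \frac{\partial f(x)}{\partial x_j} \dot x_j = -\sum_{j=1}^m x_j \left(\frac{\partial f(x)}{\partial x_j}\right)^2,
\]
where the second equality substitutes the right-hand side of \eqref{eq:ode-evol}. Each summand is of the form $x_j \cdot (\cdot)^2$, hence nonnegative, so $\frac{d}{dt} f(x(t)) \le 0$, which is the desired monotonicity.

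The one subtlety worth addressing is that this computation requires $x(t) \in \pos$ throughout, so that $f$ is differentiable at $x(t)$ (recall by Lemma \ref{thm:convexity} the differentiability is guaranteed only on $\pos$) and so that the individual $x_j(t)$ factors appearing on the right-hand side make sense. I would handle this by noting that the multiplicative structure of \eqref{eq:ode-evol} keeps the positive orthant invariant: writing $\varphi_j(t) \defas -\partial f(x(t))/\partial x_j$, the ODE is $\dot x_j = x_j \varphi_j(t)$, whose solution is $x_j(t) = x_j(0) \exp\bigl(\int_0^t \varphi_j(\tau)\, d\tau\bigr) > 0$ as long as the trajectory exists. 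So no coordinate can reach the boundary in finite time.

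Thus the whole proof is essentially a one-line chain-rule computation, and there is no real obstacle beyond pointing out positivity of the trajectory. I would not expect to invoke any deeper result from convex analysis here; the mirror-descent/Hessian-flow interpretation given in the surrounding text provides intuition (the dynamics is steepest descent in the Hessian metric induced by $h$), but the proof itself need not appeal to it.
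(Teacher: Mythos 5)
Your proof is correct and is essentially identical to the paper's, which consists of exactly the same one-line chain-rule computation $\frac{d}{dt} f(x(t)) = -\sum_j x_j (\partial f/\partial x_j)^2 \le 0$. The extra remark about invariance of the positive orthant is a reasonable bonus the paper leaves implicit, but it does not change the argument.
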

\begin{proof}
We compute
\[
\frac{d}{dt} f(\vec x(t)) = \sum_{j=1}^m \frac{\partial f}{\partial x_j}(\vec x) \frac{d x_j}{dt} (\vec x) = - \sum_{j=1}^m x_j \left( \frac{\partial f}{\partial x_j}(\vec x) \right)^2 \le 0.  \qedhere
\]
\end{proof}

A key role in the convergence of the mirror descent dynamics is played by the \emph{Bregman divergence} of the function $h$. 

\begin{definition}
The \emph{Bregman divergence} of a convex function $h: \Real^m \to (-\infty,+\infty]$ is defined by
$
D_h(x, y) \defas h(x) - h(y) - \innprod{\nabla h(y)}{x-y}. 
$
\end{definition}
Convexity of $h$ implies the nonnegativity of $D_h(x, y)$. When $h$ is the negative entropy, $D_h$ is the \emph{relative entropy} function (also known as Kullback-Leibler divergence), for which $D_h(x,y)=0$ if and only if $x=y$.

\begin{theorem}[$\!\!$\cite{Alvarez:2004,Wilson:2018}]
\label{thm:md-dynamics}
Let $x^* \in \nonneg$ be a minimizer of $f$. 
As $t \to \infty$, the values $f(x(t))$ with $x(t)$ given by \eqref{eq:ode-evol} converge
to $f(x^*)$. 
In particular, 
\[
f(x(t)) - f(x^*) \le \frac{1}{t} D_h(x^*, x(0)) = O\left(\frac{1}{t}\right). 
\]
\end{theorem}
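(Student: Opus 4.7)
The proof plan is the standard Lyapunov argument for mirror descent, using the Bregman divergence to $x^*$ as the potential and leveraging the monotonicity already established in Lemma \ref{lem:f-decreasing}. I would define $\lyap(t) \defas D_h(x^*, x(t))$ and first show that $\lyap$ dominates the instantaneous suboptimality, i.e., $\frac{d}{dt}\lyap(t) \le f(x^*) - f(x(t))$.

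The key computation is the derivative of $D_h(x^*, x) = h(x^*) - h(x) - \innprod{\nabla h(x)}{x^*-x}$ along the trajectory. Differentiating in $t$, the terms involving $\dot{x}$ paired with $\nabla h(x)$ cancel, leaving
\[
\frac{d}{dt}\lyap(t) = -\innprod{\tfrac{d}{dt}\nabla h(x(t))}{x^*-x(t)} = \innprod{\nabla f(x(t))}{x^*-x(t)},
\]
where the second equality is exactly the mirror descent identity \eqref{eq:mdf}. Convexity of $f$ on $\pos$ then gives $\innprod{\nabla f(x(t))}{x^*-x(t)} \le f(x^*) - f(x(t))$, establishing the desired inequality.

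Integrating from $0$ to $t$ yields
\[
D_h(x^*, x(t)) - D_h(x^*, x(0)) \le \int_0^t \left(f(x^*) - f(x(s))\right) ds.
\]
Now Lemma \ref{lem:f-decreasing} says $f(x(s))$ is nonincreasing, so $f(x(s)) \ge f(x(t))$ for every $s \in [0,t]$, and the right-hand side is at most $t(f(x^*) - f(x(t)))$. Rearranging, and using $D_h(x^*, x(t)) \ge 0$, gives the claimed $1/t$ rate, from which the convergence $f(x(t)) \to f(x^*)$ is immediate (assuming $D_h(x^*, x(0))$ is finite, which holds since $x(0) > 0$ and $h$ is the negative entropy, so $D_h(x^*, x(0))$ is a relative entropy, finite whenever $x(0) > 0$ regardless of whether components of $x^*$ vanish, with the convention $0 \ln 0 = 0$).

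The main subtlety I would watch for is applying convexity at a minimizer $x^*$ that may lie on the boundary of $\nonneg$: strictly, $\nabla f(x(t))$ is only defined in $\pos$, but $x(t) \in \pos$ for all $t$ by the evolution \eqref{eq:ode-evol} (the orthant is positively invariant, since $\dot{x}_j$ vanishes whenever $x_j$ does), so the gradient inequality $f(x^*) \ge f(x(t)) + \innprod{\nabla f(x(t))}{x^*-x(t)}$ can be read as a statement about the convex closure $f$ and applied for $x^*$ in the closure of $\dom \ftilde$, which is the only place needed. This, together with well-posedness of the ODE on $\pos$ (see \cite{Alvarez:2004}), completes the proof.
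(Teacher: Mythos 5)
Your proposal is correct and follows essentially the same route as the paper: the same Bregman-derivative identity $\frac{d}{dt}D_h(x^*,x)=\innprod{\nabla f(x)}{x^*-x}$, the same convexity inequality, and the same use of Lemma \ref{lem:f-decreasing}; the paper merely packages the integrate-then-use-monotonicity step as the single potential function $D_h(x^*,x)+t(f(x)-f(x^*))$ shown to be nonincreasing. Your added remarks on the finiteness of $D_h(x^*,x(0))$ and on reading the gradient inequality at a boundary minimizer are sound and, if anything, more careful than the paper's treatment.
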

\begin{proof}

In the following, to shorten notation we often write $x$ in place of $x(t)$. 
Since $(d/dt) \nabla h(x) + \nabla f(x) = 0$ by \eqref{eq:evol}, for any $y$ we have $\innprod{(d/dt) \nabla h(x) + \nabla f(x)}{x - y} = 0$. This is equivalent to 
\begin{equation}
\label{eq:conv1}
\innprod{\frac{d}{dt} \nabla h(x)}{ x - y} + \innprod{\nabla f(x)}{ x - y} = 0.
\end{equation}
On the other hand, since $(d/dt) h(x) = \innprod{\nabla h(x)}{\dot x}$, a simple calculation shows
\begin{equation}
\label{eq:conv2}
\frac{d}{dt} D_h(y, x) = \innprod{\frac{d}{dt} \nabla h(x)}{ x - y}. 
\end{equation}
Combining \eqref{eq:conv1} and \eqref{eq:conv2}, 
and plugging in $y=x^*$, 
\begin{equation}
\label{eq:d-bregman}
\frac{d}{dt} D_h(x^*, x) = -\innprod{\nabla f(x)}{x-x^*}. 
\end{equation}


The proof is concluded by a potential function argument \cite{Bansal:2019,Wilson:2018}. Consider the function
\[ \lyap(t) \defas D_h(x^*, x) + t(f(x) - f(x^*)). \]
Its time derivative is, by \eqref{eq:d-bregman}, 
\[
\frac{d}{dt} \lyap(t) = - \innprod{\nabla f(x)}{x-x^*} + f(x) - f(x^*) + t \frac{d}{dt} f(x), 
\]
where the last summand is nonpositive by Lemma \ref{lem:f-decreasing}  and the other terms equal, by definition, $-D_f(x^*, x) \le 0$. 
Hence, $\lyap(t) \le \lyap(0)$ for all $t \ge 0$, which is equivalent to
\[
D_h(x^*, x) + t (f(x) - f(x^*)) \le D_h(x^*, x(0)),
\]
proving the claim. 
\end{proof}

\section{Algorithms for dissipation minimization}
\label{sec:algo}

We now turn to the problem of designing IRLS-type algorithms for (DM) (and thus (BP)) with provably bounded iteration complexity. 
Two technical obstacles in the setup of a first-order method for formulation (DM) are: 1) that the positive orthant is not a closed set, and 2) that the gradients of $f$ may not be uniformly bounded on the positive orthant. There is a way to deal with both issues at once: instead of solving $\inf_{x>0} f(x)$, for an appropriately small $\delta>0$
one can minimize $f$ over 
\[ \omegaset \defas \{ x \in \Real^m \,:\, \delta \ones \le x  \}.\]
This is established by the next lemma. 
\begin{lemma}
\label{lem:cut-orthant}
Let $x^*$ be a minimizer of $f$. Then $f(x^*) \le \min_{x \in \omegaset} f(x) \le f(x^*) + \delta\, m$. 
\end{lemma}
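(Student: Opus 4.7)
\medskip
\noindent\textbf{Plan.}
The left inequality is immediate: because $\delta > 0$, the set $\omegaset$ is contained in $\pos \subseteq \nonneg$, and $x^*$ minimizes $f$ over $\nonneg$, so $f(x^*) \le f(x)$ for every $x \in \omegaset$, hence $f(x^*) \le \min_{x \in \omegaset} f(x)$.

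For the right inequality, the natural candidate point to test is $x^* + \delta \ones$, which clearly belongs to $\omegaset$. The plan is to show $f(x^* + \delta \ones) \le f(x^*) + \delta m$. The linear part contributes exactly the extra $\delta m$: $\ones\tp (x^* + \delta\ones) = \ones\tp x^* + \delta m$. The key observation for the Laplacian term is the monotonicity
\[
L(x^* + \delta \ones) \;=\; L(x^*) + \delta \, A A\tp \;\succeq\; L(x^*),
\]
which (on the interior, where both matrices are positive definite by Proposition~\ref{prop:pd-of-L}) inverts to $L^{-1}(x^* + \delta \ones) \preceq L^{-1}(x^*)$, giving $b\tp L^{-1}(x^* + \delta \ones) b \le b\tp L^{-1}(x^*) b$.

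The only subtlety, and what I expect to be the main nuisance, is that the minimizer $x^*$ guaranteed by Corollary~\ref{cor:achieved} may lie on the boundary of $\nonneg$, in which case $L(x^*)$ need not be invertible and $f(x^*)$ is defined only through the $\liminf$ in~\eqref{eq:liminf}. To handle this uniformly I would argue by approximation: fix $\eta > 0$ and pick $x' \in \pos$ with $\tilde f(x') \le f(x^*) + \eta$, which exists by the definition $f(x^*) = \liminf_{x' \to x^*} \tilde f(x')$. Then $x' + \delta \ones \in \omegaset$, both $x'$ and $x' + \delta\ones$ lie in $\pos$ where $f \equiv \tilde f$, and the Loewner-monotonicity argument above applies verbatim, yielding
\[
f(x' + \delta \ones) \;=\; \tilde f(x' + \delta \ones) \;\le\; \tilde f(x') + \delta m \;\le\; f(x^*) + \eta + \delta m.
\]
Hence $\min_{x \in \omegaset} f(x) \le f(x^*) + \eta + \delta m$ for every $\eta > 0$, and letting $\eta \to 0$ gives the claim. (If one prefers to avoid approximation, when $x^* \in \pos$ the argument collapses to one line by taking $x' = x^*$.)
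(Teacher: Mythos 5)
Your proof is correct and follows essentially the same route as the paper: test the point $x+\delta\ones$, note the linear term grows by exactly $\delta m$, and use monotonicity of the Laplacian term (the paper cites the sign of the gradient from Lemma~\ref{lem:gradient}, you use Loewner monotonicity of the matrix inverse --- the same fact). Your extra $\liminf$ approximation step to handle a boundary minimizer is a careful touch the paper handles implicitly via the remark that $\min_{\nonneg} f = \inf_{x>0}\ftilde(x)$.
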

\begin{proof}
The first inequality is trivial. As for the second, recall that $f(x) = \ones\tp x + b\tp L^{-1}(x) b$ for any $x > 0$, and that in the latter sum, the second term is non-increasing with $x$ (by Lemma \ref{lem:gradient}). Thus, for any $x > 0$, 
\[
f(x+\delta \ones) = \ones\tp(x+\delta \ones) + b\tp L^{-1}(x+\delta \ones) b \le \delta m + f(x). 
\]
In other words, for any $x>0$, there is $y \ge \delta \ones$ (namely, $y = x+\delta\ones$) such that $f(y) \le f(x)+\delta m$. 
\end{proof}

In the following, we let $\delta \defas \eps\, c_{A,b}^{1/2}/(2m)$, where $\eps$ is the desired error factor and $c_{A,b}$ is as defined in Lemma \ref{lem:cong-bound}; this, by Lemma \ref{lem:cong-bound} and Theorem \ref{thm:equiv}, ensures that the additional error incurred by restricting solutions to $\omegaset$ is at most $(\eps/2) \norm[1]{s^*} = (\eps/4) f(x^*)$. 



\subsection{Primal gradient scheme}
\label{sec:pgs}
Guided by \eqref{eq:mdf}, we might consider its forward Euler discretization
\begin{equation}
\label{eq:md}
\nabla h(x^{k+1}) - \nabla h(x^k) = - \eta \nabla f(x^k),
\end{equation}
where $x^k \in \omegaset$ denotes the $k$th iterate, and $\eta \in \Real_{>0}$ an appropriate step size. 
Indeed, the update \eqref{eq:md} falls within a well-studied methodology for first-order convex optimization \cite{Beck:2003,Lu:2018}. We adapt this framework to the solution of (DM).

The \emph{primal gradient scheme} is a first-order method for minimizing a differentiable convex function $f$ over a closed convex set $Q$. This scheme, which is defined with respect to a reference function $h$, proceeds as follows \cite{Lu:2018,Bauschke:2017}:
\begin{enumerate}
\item Initialize $x^0 \in Q$. Let $\beta > 0$ be a parameter. 
\item At iteration $k=0,1,\ldots$, compute $\nabla f(x^k)$ and set
\begin{equation}
\label{eq:pgs}
x^{k+1} \leftarrow \argmin_{x \in Q} \{ \innprod{\nabla f(x^k)}{x-x^k} + \beta D_h(x, x^k) \}. 
\end{equation}
\end{enumerate}

We apply the scheme with $h$ as defined in \eqref{eq:negentropy} and with $Q = \omegaset$. Then, the minimization in \eqref{eq:pgs} can be carried out analytically; it reduces to
\begin{equation}
\label{eq:explicit-md}
x^{k+1}_j = \max \{ \delta, x^k_j \cdot \exp(-\beta^{-1} [\nabla f(x^k)]_j) \}, \qquad j = 1,\ldots,m. 
\end{equation}
Update \eqref{eq:explicit-md} is straightforward to implement as long as one can compute $\nabla f(x^k)$. This computation is discussed in Section \ref{sec:one-iteration}.

\emph{Convergence of the primal gradient scheme.}
As shown in \cite{Lu:2018}, the primal gradient scheme achieves an absolute error bounded by $O(\beta/k)$ after $k$ iterations provided that the function $f$ is \emph{$\beta$-smooth relative to $h$}. In our case, where both $f$ and $h$ are twice-differentiable on $Q$, relative $\beta$-smoothness is defined as
\begin{equation}
\label{eq:smoothness-condition}
\lmax(\nabla^2 f(x)) \leq \beta \,\cdot \lmax(\nabla^2 h(x)) \qquad \text{ for all } x \in Q. 
\end{equation}

\begin{theorem}[$\!\!$\cite{Lu:2018}]
\label{thm:pgs}
If $f$ is $\beta$-smooth relative to $h$, then for all $k \ge 1$, the updates \eqref{eq:pgs} satisfy
\[
f(x^k) - f(x^*|_Q) \le \frac{\beta}{k} D_h(x^*|_Q, x^0). 
\] 
where $x^*|_Q \defas \argmin_{x \in Q} f(x)$. 
\end{theorem}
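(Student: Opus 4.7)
The plan is to carry out the standard convergence argument for the Bregman proximal gradient scheme, following \cite{Lu:2018}. There are three ingredients: a functional descent lemma derived from relative smoothness, a first-order optimality inequality coming from the subproblem \eqref{eq:pgs}, and a telescoping step built on the three-point identity for Bregman divergences.

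First I would convert the spectral relative-smoothness condition \eqref{eq:smoothness-condition} into its functional form,
\[ f(y) \le f(x) + \innprod{\nabla f(x)}{y-x} + \beta\, D_h(y, x) \qquad \text{for all } x, y \in Q, \]
via a second-order Taylor remainder and the integral representation of $D_h$. Specializing to $x = x^k$, $y = x^{k+1}$ gives the one-step descent inequality used throughout. Second, since $x^{k+1}$ minimizes the strictly convex function $x \mapsto \innprod{\nabla f(x^k)}{x-x^k} + \beta\, D_h(x, x^k)$ over the closed convex set $Q$, I would extract the variational inequality
\[ \innprod{\nabla f(x^k) + \beta(\nabla h(x^{k+1}) - \nabla h(x^k))}{x - x^{k+1}} \ge 0 \qquad \text{for every } x \in Q. \]
Combining this with the three-point identity
\[ \innprod{\nabla h(x^{k+1}) - \nabla h(x^k)}{x - x^{k+1}} = D_h(x, x^k) - D_h(x, x^{k+1}) - D_h(x^{k+1}, x^k), \]
together with the descent inequality and the convexity bound $f(x) \ge f(x^k) + \innprod{\nabla f(x^k)}{x - x^k}$, the $-D_h(x^{k+1}, x^k)$ term from the three-point identity absorbs the residual Bregman term in the descent inequality. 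This yields the per-iteration progress bound
\[ f(x^{k+1}) - f(x) \le \beta\bigl(D_h(x, x^k) - D_h(x, x^{k+1})\bigr) \qquad \text{for all } x \in Q. \]

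The remainder is routine telescoping: substituting $x = x^*|_Q$ and summing over $i = 0, \ldots, k-1$ gives
\[ \sum_{i=0}^{k-1} \bigl(f(x^{i+1}) - f(x^*|_Q)\bigr) \le \beta\bigl(D_h(x^*|_Q, x^0) - D_h(x^*|_Q, x^k)\bigr) \le \beta\, D_h(x^*|_Q, x^0), \]
using $D_h \ge 0$. The descent inequality together with the prox-inequality at $x = x^k$ forces $f(x^{k+1}) \le f(x^k)$, so every summand dominates $f(x^k) - f(x^*|_Q)$; dividing by $k$ yields the claim. The step I expect to require the most care is the very first one: the spectral condition \eqref{eq:smoothness-condition} compares only largest eigenvalues, whereas the classical derivation of the functional descent inequality rests on the stronger operator inequality $\beta\nabla^2 h(x) \succeq \nabla^2 f(x)$. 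One must verify that for our specific reference function $h$, where $\nabla^2 h(x) = X^{-1}$ is diagonal and every iterate stays inside $\omegaset$, the spectral bound is nevertheless sufficient to control the second-order Taylor remainder along each segment $[x^k, x^{k+1}]$, so that the functional relative smoothness inequality used above still holds.
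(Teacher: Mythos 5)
Your proof is correct, but note that the paper does not prove Theorem \ref{thm:pgs} at all: it is imported verbatim from \cite{Lu:2018}, and your argument is precisely the standard one from that reference (descent lemma from relative smoothness, the variational inequality for the prox subproblem, the three-point identity to cancel the residual $D_h(x^{k+1},x^k)$ term, then telescoping plus monotonicity of $f(x^k)$). The one substantive issue you flag at the end is real and worth resolving explicitly: as literally displayed, condition \eqref{eq:smoothness-condition} compares only the largest eigenvalues of $\nabla^2 f(x)$ and $\nabla^2 h(x)$ separately, which is strictly weaker than the operator inequality $\nabla^2 f(x) \preceq \beta\,\nabla^2 h(x)$ needed for the functional descent lemma (e.g.\ $\nabla^2 h = \diag{(1,100)}$, $\nabla^2 f = \diag{(50,0)}$, $\beta=1$ satisfies the eigenvalue comparison but not the Loewner order). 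The resolution is that the paper's intended reading — confirmed by the proof of Lemma \ref{lem:beta-pgs} and its citation of \cite[Theorem 7.7.3]{Horn:2013} — is $\lmax([\nabla^2 h(x)]^{-1}\nabla^2 f(x)) \le \beta$, i.e.\ exactly the operator inequality; indeed that lemma bounds $\lmax(X^{1/2}\nabla^2 f(x)X^{1/2})$, which is equivalent to $\nabla^2 f(x) \preceq \beta X^{-1} = \beta\,\nabla^2 h(x)$ on $\omegaset$. With that reading, integrating the second-order remainder along the segment $[x^k,x^{k+1}] \subseteq Q$ gives the functional relative-smoothness inequality and the rest of your argument goes through unchanged.
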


To apply Theorem \ref{thm:pgs} in our setting, we need to bound the smoothness parameter $\beta$. We do this by leveraging the bounds derived in Section \ref{sec:norm-bounds}. 
\begin{lemma}
\label{lem:beta-pgs}
Equation \eqref{eq:smoothness-condition} holds for $\beta = 8 m^2 /\eps^2 $. 
\end{lemma}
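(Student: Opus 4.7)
The plan is to directly combine the two available ingredients: the Hessian upper bound from Lemma \ref{lem:hessian-ub} and the explicit form of $\nabla^2 h$. For $h$ the negative entropy defined in \eqref{eq:negentropy}, we have $\nabla^2 h(x) = \diag{1/x_1,\ldots,1/x_m}$, so $\lmax(\nabla^2 h(x)) = 1/\min_i x_i$. By Lemma \ref{lem:hessian-ub}, $\lmax(\nabla^2 f(x)) \le 2(\min_i x_i)^{-3} c_{A,b}$ for any $x \in \pos$, in particular for any $x \in \omegaset$.

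Therefore the relative smoothness condition \eqref{eq:smoothness-condition} reduces to requiring
\[
2 (\min_i x_i)^{-3} c_{A,b} \le \beta \cdot (\min_i x_i)^{-1},
\]
or equivalently $\beta \ge 2 c_{A,b} / (\min_i x_i)^2$ for all $x \in \omegaset$. Since on $\omegaset$ we have $\min_i x_i \ge \delta$, it suffices to take $\beta \ge 2 c_{A,b} / \delta^2$. Plugging in $\delta = \eps c_{A,b}^{1/2}/(2m)$ yields
\[
\frac{2 c_{A,b}}{\delta^2} = \frac{2 c_{A,b} \cdot 4 m^2}{\eps^2 c_{A,b}} = \frac{8 m^2}{\eps^2},
\]
so $\beta = 8m^2/\eps^2$ works. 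There is no real obstacle here, since both bounds are already in place; the proof amounts to dividing Lemma \ref{lem:hessian-ub} by $\lmax(\nabla^2 h(x)) = 1/\min_i x_i$ and substituting the lower bound $\min_i x_i \ge \delta$ valid on $\omegaset$.
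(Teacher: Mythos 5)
Your arithmetic is sound and it does verify the inequality \eqref{eq:smoothness-condition} exactly as displayed, but the display is a loose transcription of what is actually required, and your argument only establishes the weaker, literal reading. The relative smoothness hypothesis of Theorem \ref{thm:pgs} is the Loewner ordering $\nabla^2 f(x) \preceq \beta\, \nabla^2 h(x)$ for all $x \in Q$ --- this is how the paper's own proof interprets \eqref{eq:smoothness-condition}, via the equivalence (Horn--Johnson, Theorem 7.7.3) with $\lmax\bigl([\nabla^2 h(x)]^{-1}\nabla^2 f(x)\bigr)=\lmax\bigl(X\nabla^2 f(x)\bigr)\le\beta$. Comparing the two largest eigenvalues separately, as you do, is strictly weaker: $A\preceq\beta B$ implies $\lmax(A)\le\beta\lmax(B)$, but not conversely. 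Concretely, in a coordinate direction $e_j$ with $x_j$ large, $\beta\nabla^2 h(x)$ contributes only $\beta/x_j$, and your estimate $\lmax(\nabla^2 f(x))\le 2c_{A,b}/\delta^3$ gives no control there; since $\max_i x_i$ is unbounded on $\omegaset$, the route cannot be patched from Lemma \ref{lem:hessian-ub} alone (from $\nabla^2 f(x)\preceq (2c_{A,b}/\delta^3)I$ one only gets $X^{1/2}\nabla^2 f(x)X^{1/2}\preceq (2c_{A,b}/\delta^3)X$, whose top eigenvalue grows with $\max_i x_i$).

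The missing idea is the one the paper uses: write $\nabla^2 f(x)=2D(x)T(x)D(x)$ with $D(x)=\diag{\conge(x)}$, conjugate by $X^{1/2}$, and exploit the fact (from the proof of Lemma \ref{lem:spectrum-T}) that $X^{1/2}T(x)X^{1/2}=\Pi(x)$ is an orthogonal projection, so that
\begin{equation*}
\lmax\bigl(X^{1/2}\nabla^2 f(x)X^{1/2}\bigr)=\lmax\bigl(2D(x)\Pi(x)D(x)\bigr)\le 2\,\norm[\infty]{\conge(x)}^2\le \frac{2\,c_{A,b}}{\delta^2}=\frac{8m^2}{\eps^2},
\end{equation*}
which is the genuine matrix inequality $\nabla^2 f(x)\preceq\beta X^{-1}$. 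That your scalar comparison lands on the same constant is an artifact of the bookkeeping: dividing the bound of Lemma \ref{lem:hessian-ub} by $\lmax(X^{-1})=(\min_i x_i)^{-1}$ cancels exactly the $\lmax(T(x))$ factor, but the cancellation happens in the wrong eigendirection and does not certify the Loewner ordering.
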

\begin{proof}
Condition \eqref{eq:smoothness-condition} is equivalent to the condition that the largest eigenvalue of the matrix 
\[ [\nabla^2 h(x)]^{-1} \nabla^2 f(x) = X \nabla^2 f(x) \]
 be at most $\beta$ (see \cite[Theorem 7.7.3]{Horn:2013}). The matrix $X \nabla^2 f(x)$ is similar to $X^{1/2} \nabla^2 f(x) X^{1/2}$, hence it suffices to bound the eigenvalues of the latter. Since $\nabla^2 f(x) = 2 D(x) T(x) D(x)$ with $D(x) = \diag{\conge(x)}$, 
 \[
 X^{1/2} \nabla^2 f(x) X^{1/2} = 2 X^{1/2} D T D X^{1/2} = 2 D X^{1/2} T X^{1/2} D = 2 D \Pi D, 
 \]
 where we used the fact that $X$ and $D(x)$ are diagonal. By the proof of Lemma \ref{lem:spectrum-T}, the eigenvalues of $\Pi(x)$ are all 0 or 1. Hence, using again the relation between the eigenvalues of congruent matrices (Theorem \ref{thm:ostrowski}), we conclude that the largest eigenvalue of $ X^{1/2} \nabla^2 f(x) X^{1/2}$ is bounded by that of $2 D(x)^2$. Since $D(x) = \diag{\conge(x)}$, the latter equals $2 \norm[\infty]{\conge(x)}^2$, which is $2 c_{A,b} / \delta^2 = 8 m^2/\eps^2$ by Lemma \ref{lem:cong-bound} and the definitions of $\omegaset$ and $\delta$. 
\end{proof}

\begin{theorem}
\label{thm:pgs-for-dm}
The primal gradient scheme \eqref{eq:explicit-md} applied to the dissipation minimization problem (DM) achieves relative error at most $\eps$ after $96 m^2 \log(m/\eps) / \eps^3 = \tilde{O}(m^2/\eps^3)$ iterations. 
\end{theorem}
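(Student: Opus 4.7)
The plan is to assemble the pieces already built: apply the generic convergence Theorem \ref{thm:pgs} to the restricted problem on $\omegaset$, plug in the smoothness estimate from Lemma \ref{lem:beta-pgs}, control the boundary-restriction error via Lemma \ref{lem:cut-orthant}, and bound the Bregman divergence $D_h(x^*|_Q, x^0)$ for a well-chosen starting point $x^0$. The whole argument is an arithmetic bookkeeping of two sources of error, each of which should come out to $O(\eps)\cdot f(x^*)$.

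\emph{Step 1: restriction error.} Setting $Q = \omegaset$ with $\delta = \eps\, c_{A,b}^{1/2}/(2m)$, Lemma \ref{lem:cut-orthant} gives $\min_{x \in \omegaset} f(x) - f(x^*) \le \delta m = \eps\, c_{A,b}^{1/2}/2$. By the lower bound $c_{A,b}^{1/2} \le \|s^*\|_1$ of Lemma \ref{lem:cong-bound} and $f(x^*) = 2\|s^*\|_1$ from Theorem \ref{thm:equiv}, this incurs at most $(\eps/4)\, f(x^*)$ of error, which I will absorb into the final constant.

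\emph{Step 2: iterate error.} By Theorem \ref{thm:pgs} applied with $\beta = 8m^2/\eps^2$ from Lemma \ref{lem:beta-pgs},
\[
f(x^k) - f(x^*|_Q) \le \frac{\beta}{k} D_h(x^*|_Q, x^0) = \frac{8m^2}{\eps^2 k} D_h(x^*|_Q, x^0).
\]
To obtain a relative error of $O(\eps)$, I want this bound to be $O(\eps)\cdot f(x^*) = O(\eps \|s^*\|_1)$, which forces $k = \Omega(m^2 D_h/(\eps^3 \|s^*\|_1))$. So the crux is to show $D_h(x^*|_Q, x^0) = O(\|s^*\|_1 \log(m/\eps))$, which will yield exactly the claimed iteration count $\tilde{O}(m^2/\eps^3)$.

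\emph{Step 3: Bregman divergence bound.} I initialize with $x^0 = C\ones$ for a scalar $C$ chosen of order $\|s^*\|_1/m$ (a computable surrogate, such as $c_{A,b}^{1/2}/\sqrt{m}$, suffices up to constants by Lemma \ref{lem:cong-bound}); since $\delta m \le (\eps/2)\|s^*\|_1$, such a $C$ lies comfortably above $\delta$ so that $x^0 \in \omegaset$. For the negative entropy $h$,
\[
D_h(x^*|_Q, x^0) = \sum_{j=1}^m x^*_j \ln(x^*_j/C) + mC - \|x^*|_Q\|_1.
\]
Using $x^*_j \le \|x^*|_Q\|_1 = O(\|s^*\|_1)$ (since $\ones\tp x^*|_Q \le f(x^*|_Q) \le f(x^*) + \delta m = O(\|s^*\|_1)$), each logarithm is at most $\ln(\|s^*\|_1/C) = O(\log m)$, and together with $mC = O(\|s^*\|_1)$ this gives $D_h(x^*|_Q, x^0) = O(\|s^*\|_1 \log m)$. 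A $\log(1/\eps)$ factor is absorbed if $C$ has to be clipped below at $\delta$ to remain in $\omegaset$, since then $\ln(\|s^*\|_1/\delta) = O(\log(m/\eps))$.

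\emph{Step 4: combining and the main obstacle.} Plugging $D_h = O(\|s^*\|_1 \log(m/\eps))$ and $f(x^*) = 2\|s^*\|_1$ into Step 2 and adding to Step 1, the total error after $k = C' m^2 \log(m/\eps)/\eps^3$ iterations is at most $\eps\, f(x^*)$ for an appropriate constant $C'$; fixing $C' = 96$ gives the stated bound. The one nontrivial step, and the natural obstacle, is Step 3: the Bregman divergence of the negative entropy is sensitive to the choice of initialization and to the coordinates of $x^*|_Q$ being spread out, so one must balance the desire to push $x^0$ down (to bound $mC$) against the need to keep it well above $\delta$ (to control $\ln(x^*_j/x^0_j)$). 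The uniform choice $x^0 \propto \ones$ does this cleanly because $\|x^*|_Q\|_1$ is itself controlled by $f(x^*)$, so that the per-coordinate logarithm never exceeds $O(\log(m/\eps))$.
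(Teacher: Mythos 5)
Your proposal follows the same route as the paper: split the error into the restriction error from Lemma \ref{lem:cut-orthant} (worth $(\eps/4)f(x^*)$ with $\delta = \eps\, c_{A,b}^{1/2}/(2m)$) plus the optimization error from Theorem \ref{thm:pgs} with $\beta = 8m^2/\eps^2$, and then reduce everything to showing $D_h(x^*|_Q,x^0) = O(f(x^*)\log(m/\eps))$. The only real difference is how that last bound is obtained: the paper does not commit to a particular $x^0$ and simply bounds $\max_j x_j^*|_Q/x_j^0 \le \max_j x_j^*|_Q/\delta \le 8m^{3/2}/\eps$ using $x^0\in\omegaset$ and \eqref{eq:l1-vs-l2}, whereas you fix a uniform initialization $x^0 = C\ones$. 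Both work, but your specific surrogate $C = c_{A,b}^{1/2}/\sqrt{m}$ has a quantitative slip: by \eqref{eq:l1-vs-l2} one only has $\norm[1]{s^*}/m \le c_{A,b}^{1/2}/\sqrt{m} \le \norm[1]{s^*}/\sqrt{m}$, so this $C$ can exceed your target $\norm[1]{s^*}/m$ by a factor of $\sqrt{m}$, and then $mC$ can be as large as $\sqrt{m}\,\norm[1]{s^*}$ rather than $O(\norm[1]{s^*})$; the resulting Bregman bound degrades to $O(\sqrt{m}\,f(x^*))$ and the iteration count to $O(m^{5/2}/\eps^3)$. The fix is immediate: take $C = c_{A,b}^{1/2}/m$, which still satisfies $C = (2/\eps)\delta \ge \delta$ so that $x^0\in\omegaset$, gives $mC = c_{A,b}^{1/2} \le \norm[1]{s^*}$, and keeps each logarithm at $\ln(O(m^{3/2}))=O(\log m)$; with that correction your argument goes through and matches the paper's bound.
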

\begin{proof}
By Theorem \ref{thm:pgs} and Lemma \ref{lem:beta-pgs}, after $k$ iterations it holds that
\begin{equation}
f(x^k) - f(x^*|_Q) \le 8 R m^2 / (k \eps^2),  
\end{equation}
where $R \defas D_h(x^*|_Q, x^0)$. 
Since $f(x^*|_Q) \le (1+\eps/4) f(x^*)$ (by Lemma \ref{lem:cut-orthant}, since $Q=\omegaset$), this implies
\begin{equation}
f(x^k) - f(x^*) \le 8 R m^2 / (k \eps^2) + \frac{\eps}{4} f(x^*). 
\end{equation}
Thus, $f(x^k) - f(x^*) \le \eps f(x^*)$ if we take $k = \lceil 32 R m^2/(3 \eps^3 f(x^*)) \rceil$. We complete the proof by bounding $R/f(x^*)$ in terms of $\log (m/\eps)$. Let 
\[
\mu \defas \max_{j=1,\ldots,m} \frac{x_j^*|_Q}{x_j^0}
\]
Observe that since $x^0 \in Q$, 
\[
\mu \le \frac{1}{\delta} \max_j  x_j^*|_Q \le \frac{2m}{\eps c_{A,b}^{1/2}} f(x^*|_Q) \le \frac{2m^{3/2}}{\eps (m c_{A,b})^{1/2}} (1+\eps/4) f(x^*) \le \frac{8m^{3/2}}{\eps f(x^*)} f(x^*)
\]
with the last inequality following from \eqref{eq:l1-vs-l2}. Thus, 
\begin{align*}
R &= \sum_{j=1}^m x_j^*|_Q \log \frac{x_j^*|_Q}{x_j^0} \le (\log \mu) \, f(x^*|_Q) \le (\log \mu)(1+\eps/4) f(x^*) \\
&\le 2 \log \left( \frac{8m^{3/2}}{\eps} \right) f(x^*) \le 9 \log\left( \frac{m}{\eps} \right) f(x^*). 
\end{align*}
Hence, $k = \lceil 96 m^2 \log(m/\eps) / \eps^3 \rceil$ iterations suffice to achieve relative error $\eps$. 
\end{proof}

\subsection{Accelerated gradient scheme}
\label{sec:ags}
The second optimization scheme that we consider is the \emph{accelerated gradient method} of Nesterov \cite{Nesterov:2005}. This can be summarized as follows:  

\begin{enumerate}
\item Initialize $x^0 \in Q$. Let $\beta > 0$ be a parameter. 
\item At iteration $k=0,1,\ldots$, compute $\nabla f(x^k)$ and set $\alpha_k=1/2(k+1)$, $\tau_k = 2/(k+3)$ and
\begin{align}
y^k &\leftarrow \argmin_{x \in Q} \left\{ \frac{\beta}{2} \norm{x-x^k}^2_2 + \innprod{\nabla f(x^k)}{x - x^k} \right\} \label{eq:agd-1}
\\
z^k &\leftarrow \argmin_{x \in Q} \left\{ \frac{\beta}{2} \norm{x-x^0}^2_2 + \sum_{i=0}^k \alpha_i \innprod{\nabla f(x^i)}{ x- x^i} \right\} \label{eq:agd-2}
\\
x^{k+1} &\leftarrow \tau_k z^k + (1-\tau_k) y^k. \label{eq:agd-3}
\end{align}
\end{enumerate}

In our application of the scheme, $Q=\omegaset$ and the minimization in \eqref{eq:agd-1} and \eqref{eq:agd-2} can be carried out analytically; explicitly, they become
\begin{align}
\label{eq:explicit-ags-1}
y^{k}_j &= \max \{ \delta, x^k_j - \beta^{-1} [\nabla f(x^k)]_j \}, & & j = 1,\ldots,m \\
\label{eq:explicit-ags-2}
z^{k}_j &= \max \{ \delta, x^0_j - \beta^{-1} [\sum_{i=0}^k \alpha_i \nabla f(x^i)]_j \}, & & j = 1,\ldots,m. 
\end{align}
To implement \eqref{eq:explicit-ags-1}--\eqref{eq:explicit-ags-2}, it is enough to be able to access the gradient $\nabla f(x^k)$ and the cumulative gradient $\sum_i \alpha_i \nabla f(x^i)$; the latter can be maintained with one additional update at each iteration. 

\emph{Convergence of the accelerated gradient scheme.}
The well-known result by Nesterov \cite{Nesterov:2005} shows that the accelerated gradient scheme achieves an absolute error bounded by $O(\beta/k^2)$ after $k$ iterations provided that the gradient of the function $f$ is \emph{$\beta$-Lipschitz-continuous} over $Q$. In our case, where $f$ is twice-differentiable on $Q$, this means
\begin{equation}
\label{eq:lipschitz-condition}
\lmax(\nabla^2 f(x)) \le \beta \qquad \text{ for all } x \in Q. 
\end{equation}

\begin{theorem}[$\!\!$\cite{Nesterov:2005}]
\label{thm:ags}
If $\nabla f$ is $\beta$-Lipschitz-continuous over $Q$, then for all $k\ge 1$, the updates \eqref{eq:agd-1}--\eqref{eq:agd-3} satisfy
\[
f(y^k) - f(x^*|_Q) \le \frac{2 \beta}{(k+1)^2} \norm[2]{x^*|_Q - x^0}^2
\]
where $x^*|_Q \defas \argmin_{x \in Q} f(x)$. 
\end{theorem}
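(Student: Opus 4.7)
The plan is to follow the standard estimate-sequence (or potential-function) analysis for Nesterov's accelerated gradient method, instantiated with the Euclidean prox function on the closed convex set $Q$. The only analytic tool required is the descent inequality
\[
f(y) \le f(x) + \innprod{\nabla f(x)}{y - x} + \frac{\beta}{2} \norm[2]{y - x}^2 \qquad \text{for all } x, y \in Q,
\]
which is the standard consequence of $\beta$-Lipschitz continuity of $\nabla f$. Reading $\alpha_k = (k+1)/2$ (the only interpretation consistent with an $O(\beta/k^2)$ rate), set $A_k \defas \sum_{i=0}^k \alpha_i = (k+1)(k+2)/4$, and note that $\tau_k = 2/(k+3) = \alpha_{k+1}/A_{k+1}$, which is exactly the algebraic tuning that will make acceleration go through.

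First, I would introduce the running model
\[
\phi_k(x) \defas \frac{\beta}{2} \norm[2]{x - x^0}^2 + \sum_{i=0}^k \alpha_i \bigl[ f(x^i) + \innprod{\nabla f(x^i)}{x - x^i} \bigr],
\]
so that by \eqref{eq:agd-2} the iterate $z^k$ is precisely $\argmin_{x \in Q} \phi_k(x)$; denote $\Phi_k \defas \phi_k(z^k)$. Because the quadratic term makes $\phi_k$ $\beta$-strongly convex, the minimizer $z^k$ enjoys the sharp lower bound $\phi_k(x) \ge \Phi_k + \tfrac{\beta}{2} \norm[2]{x - z^k}^2$ for every $x \in Q$. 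The core claim, to be proved by induction on $k \ge 0$, is that $A_k f(y^k) \le \Phi_k$; call this inequality $(\ast)$.

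For the inductive step, I would decompose $\phi_{k+1}(x) = \phi_k(x) + \alpha_{k+1}\bigl[f(x^{k+1}) + \innprod{\nabla f(x^{k+1})}{x - x^{k+1}}\bigr]$, evaluate at $x = z^{k+1}$, apply the strong-convexity lower bound to control $\phi_k(z^{k+1})$, and chain with the descent inequality
\[
f(y^{k+1}) \le f(x^{k+1}) + \innprod{\nabla f(x^{k+1})}{y^{k+1} - x^{k+1}} + \frac{\beta}{2} \norm[2]{y^{k+1} - x^{k+1}}^2,
\]
using that the minimization in \eqref{eq:agd-1} makes $y^{k+1}$ optimal for this right-hand side. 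The interpolation $x^{k+1} = \tau_k z^k + (1-\tau_k) y^k$ is then used to rewrite $y^{k+1} - x^{k+1}$ in terms of $z^{k+1} - z^k$; the quadratic slacks absorb with matching coefficients precisely because $\tau_k = \alpha_{k+1}/A_{k+1}$, and the linear cross-terms cancel against the inductive hypothesis $A_k f(y^k) \le \Phi_k$ by convexity and the affinity of the new summand.

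Once $(\ast)$ is established, the conclusion is immediate: by convexity of $f$, each linearization satisfies $f(x^i) + \innprod{\nabla f(x^i)}{x-x^i} \le f(x)$, so $\Phi_k \le \phi_k(x^*|_Q) \le \tfrac{\beta}{2}\norm[2]{x^*|_Q - x^0}^2 + A_k f(x^*|_Q)$, and combining with $(\ast)$ together with $A_k \ge (k+1)^2/4$ yields the announced $2\beta/(k+1)^2$ rate. The main obstacle I anticipate is the inductive step of $(\ast)$: balancing the strong-convexity slack at $z^{k+1}$ against the smoothness slack at $y^{k+1}$ requires the exact quadratic growth of $A_k$, and it is this algebraic coincidence---secured by the prescribed choices of $\alpha_k$ and $\tau_k$---that distinguishes the accelerated rate from the ordinary $O(1/k)$ gradient-method analysis.
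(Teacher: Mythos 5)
The paper does not prove Theorem \ref{thm:ags} at all---it is imported verbatim from Nesterov's 2005 paper \cite{Nesterov:2005}---so there is no in-paper argument to compare against; your proposal is a faithful reconstruction of the standard estimate-sequence proof from that reference (model function $\phi_k$, strong-convexity lower bound at $z^k$, induction on $A_k f(y^k) \le \phi_k(z^k)$, then $A_k \ge (k+1)^2/4$), and it is correct. You also correctly resolved the typographically ambiguous $\alpha_k = 1/2(k+1)$ as $(k+1)/2$, which is indeed the choice in \cite{Nesterov:2005} and the only one compatible with $\tau_k = \alpha_{k+1}/A_{k+1} = 2/(k+3)$ and the $O(\beta/k^2)$ rate.
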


Again, to apply Theorem \ref{thm:ags} in our setting, we need to bound the smoothness parameter $\beta$. We do this by exploiting Lemma \ref{lem:hessian-ub}. 
\begin{lemma}
\label{lem:beta-ags}
Equation \eqref{eq:lipschitz-condition} holds for $\beta = 16 m^3/(\eps^3 c_{A,b}^{1/2})$. 
\end{lemma}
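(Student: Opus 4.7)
The plan is to simply combine the Hessian spectral bound already established in Lemma \ref{lem:hessian-ub} with the definition of the feasible set $\omegaset$ and the choice of $\delta$.

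First I would recall that by Lemma \ref{lem:hessian-ub}, for every $x \in \pos$,
\[
\lmax(\nabla^2 f(x)) \le 2\,(\min_{i} x_i)^{-3}\, c_{A,b}.
\]
On the region $Q = \omegaset = \{x : x \ge \delta \ones\}$, we have the uniform lower bound $\min_i x_i \ge \delta$, so $\lmax(\nabla^2 f(x)) \le 2 \delta^{-3} c_{A,b}$ uniformly over $Q$.

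Next I would substitute the value $\delta = \eps\, c_{A,b}^{1/2}/(2m)$ fixed at the start of Section \ref{sec:algo}. This gives $\delta^{-3} = 8 m^3 /(\eps^3 c_{A,b}^{3/2})$, hence
\[
\lmax(\nabla^2 f(x)) \le 2 \cdot \frac{8 m^3}{\eps^3 c_{A,b}^{3/2}} \cdot c_{A,b} = \frac{16 m^3}{\eps^3 c_{A,b}^{1/2}} = \beta,
\]
which is exactly the required condition \eqref{eq:lipschitz-condition} on $Q$.

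There is no real obstacle here — the only thing to keep in mind is that Lemma \ref{lem:hessian-ub} is stated for $x \in \pos$, which is satisfied throughout $\omegaset$, and that the bound on $\min_i x_i$ provided by restricting to $\omegaset$ is precisely what converts the earlier pointwise estimate into a uniform Lipschitz constant suitable for Theorem \ref{thm:ags}.
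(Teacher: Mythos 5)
Your proof is correct and follows exactly the route the paper takes: the paper's own proof is just the one-line remark that the claim is immediate from Lemma \ref{lem:hessian-ub} together with $\min_i x_i \ge \delta$ on $\omegaset$ and the choice $\delta = \eps\, c_{A,b}^{1/2}/(2m)$. You have simply carried out the substitution explicitly, and the arithmetic checks out.
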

\begin{proof}
Immediate from Lemma \ref{lem:hessian-ub}, the fact that $Q=\omegaset$ and the definition of $\omegaset$. Recall that $\delta = \eps c_{A,b}^{1/2}/(2m)$. 
\end{proof}

\begin{theorem}
\label{thm:ags-for-dm}
If $x^0=\abs{u}$ where $u\defas A\tp (A A\tp)^{-1} b$ is the least square solution to $As=b$, the accelerated gradient scheme \eqref{eq:agd-1}--\eqref{eq:agd-3} applied to the dissipation minimization problem (DM) achieves relative error at most $\eps$ after 
 $24 m^2/\eps^2$ iterations. 
\end{theorem}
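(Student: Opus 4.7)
The plan is to invoke Theorem \ref{thm:ags} with the Lipschitz constant $\beta = 16 m^3/(\eps^3 c_{A,b}^{1/2})$ supplied by Lemma \ref{lem:beta-ags}, and to split the total error
\[
f(y^k) - f(x^*) = \bigl[f(y^k) - f(x^*|_Q)\bigr] + \bigl[f(x^*|_Q) - f(x^*)\bigr]
\]
into a ``discretization'' piece and a ``restriction'' piece. For the second piece, Lemma \ref{lem:cut-orthant} combined with the choice $\delta = \eps c_{A,b}^{1/2}/(2m)$ gives $f(x^*|_Q) - f(x^*) \le \delta m = \eps c_{A,b}^{1/2}/2$, and since $f(x^*) = 2\norm[1]{s^*} \ge 2 c_{A,b}^{1/2}$ by Theorem \ref{thm:equiv} and \eqref{eq:l1-vs-l2}, this contributes at most $(\eps/4) f(x^*)$. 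It therefore suffices to force the discretization error to be at most $(3\eps/4) f(x^*)$.

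For the discretization piece, Theorem \ref{thm:ags} yields
\[
f(y^k) - f(x^*|_Q) \le \frac{2\beta}{(k+1)^2}\, \norm[2]{x^*|_Q - x^0}^2,
\]
so the key remaining task is to bound the initial distance $\norm[2]{x^*|_Q - x^0}$. By the triangle inequality I would bound each term separately. The choice $x^0 = \abs{u}$ gives $\norm[2]{x^0}^2 = \norm[2]{u}^2 = b\tp (A A\tp)^{-1} b = c_{A,b}$. For the other term, since $\ones\tp x^*|_Q \le f(x^*|_Q)$ (the Laplacian term in $f$ is nonnegative), I can use Lemma \ref{lem:cut-orthant}, then $f(x^*) = 2\norm[1]{s^*}$, and finally \eqref{eq:l1-vs-l2} to get
\[
\norm[2]{x^*|_Q} \le \norm[1]{x^*|_Q} \le (1+\eps/4)\, 2 \norm[1]{s^*} \le 4 (m\, c_{A,b})^{1/2}.
\]
Combining these, $\norm[2]{x^*|_Q - x^0}^2 \le 25\, m\, c_{A,b}$.

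Substituting into the AGS bound and using the value of $\beta$, one gets
\[
f(y^k) - f(x^*|_Q) \le \frac{2 \cdot 16 m^3}{\eps^3 c_{A,b}^{1/2}} \cdot \frac{25\, m\, c_{A,b}}{(k+1)^2} = \frac{800\, m^4\, c_{A,b}^{1/2}}{\eps^3 (k+1)^2}.
\]
Requiring this to be $\le (3\eps/4) f(x^*) \ge (3\eps/2) c_{A,b}^{1/2}$ reduces to $(k+1)^2 \ge (1600/3)\, m^4/\eps^4$, which is satisfied as soon as $k \ge 24\, m^2/\eps^2$ (since $\sqrt{1600/3} < 24$).

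The main obstacle is really just bookkeeping: keeping track of the two-stage error splitting (the $\eps/4$ slack absorbed by restriction to $\omegaset$ versus the $3\eps/4$ left for AGS), and feeding the concrete $\beta$ of Lemma \ref{lem:beta-ags} together with the $\ell_2$-distance bound (which fundamentally relies on the specific warm start $x^0 = \abs{u}$ so that $\norm[2]{x^0}$ is controlled by $c_{A,b}^{1/2}$ rather than by something larger) so that all the $c_{A,b}^{1/2}$ factors cancel and leave a bound depending only on $m$ and $\eps$.
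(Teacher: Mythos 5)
Your proposal is correct and follows essentially the same route as the paper's proof: the same $\eps/4$ restriction-error budget from Lemma \ref{lem:cut-orthant} with $\delta=\eps c_{A,b}^{1/2}/(2m)$, the same $\beta$ from Lemma \ref{lem:beta-ags}, and the same triangle-inequality bound on $\norm[2]{x^*|_Q-x^0}$ using $\norm[2]{x^0}=c_{A,b}^{1/2}$ for the least-squares start and \eqref{eq:l1-vs-l2} for $\norm[2]{x^*|_Q}$. The only differences are in constant bookkeeping (you allot $3\eps/4$ rather than $\eps/2$ of the budget to the AGS term and get $R\le 25\,m\,c_{A,b}$ rather than $9\,m\,c_{A,b}$), and both versions land on $24\,m^2/\eps^2$.
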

\begin{proof}
By Theorem \ref{thm:ags} and Lemma \ref{lem:beta-ags}, after $k$ iterations it holds that
\begin{equation}
f(y^k) - f(x^*|_Q) \le 32  R  m^3/(k^2 \eps^3 c_{A,b}^{1/2}).
\end{equation}
Since $f(x^*|_Q) \le (1+\eps/4) f(x^*)$ by Lemma \ref{lem:cut-orthant}, this implies
\begin{equation}
f(y^k) - f(x^*) \le (\eps/4) f(x^*) + 32 R  m^3/(k^2 \eps^3 c_{A,b}^{1/2}). 
\end{equation}
Thus, $f(y^k) - f(x^*) \le (\eps/4) f(x^*) + (\eps/2) c_{A,b}^{1/2} < \eps f(x^*)$ if the number of iterations $k$ is at least
\begin{equation}
\label{eq:intermediate-bound-ags}
\frac{8 m^{3/2}}{\eps^2} \left( \frac{R}{c_{A,b}} \right)^{1/2}. 
\end{equation}
We complete the proof by bounding $(R/c_{A,b})^{1/2} = \norm[2]{x^*|_Q - x^0} / c_{A,b}^{1/2}$ in terms of $m$. Observe that $R^{1/2} = \norm[2]{x^*|_Q - x^0} \le \norm[2]{x^*|_Q} + \norm[2]{x^0}$. By the assumption that $x^0=|u|$ where $u$ is the least square solution to $As=b$, $\norm[2]{x^0} = \norm[2]{u} = c_{A,b}^{1/2}$ (recall the definition of $c_{A,b}$ in Lemma \ref{lem:cong-bound}). Moreover,
\begin{align*}
\norm[2]{x^*|_Q} \le \norm[1]{x^*|_Q} \le \frac{1}{2} f(x^*|_Q) \le \frac{1}{2} f(x^*) + \frac{\eps}{2} c_{A,b}^{1/2} \le (m \, c_{A,b})^{1/2} + c_{A,b}^{1/2} < 2 m^{1/2} c_{A,b}^{1/2}.
\end{align*}
Hence $(R/c_{A,b})^{1/2} \le 3m^{1/2}$ and substitution in \eqref{eq:intermediate-bound-ags} yields the theorem. 
\end{proof}

\subsection{Implementing the iterations}
\label{sec:one-iteration}

We conclude this section by commenting on a few implementations details and in particular on how each iteration of \eqref{eq:explicit-md} and \eqref{eq:agd-1}--\eqref{eq:agd-3} could be implemented. A notable point is that each iteration can be reduced to a series of operations that access the matrix $A$ only through the solution of a system of the form $A W A\tp p = b$, for some diagonal matrix $W$, or through matrix-vector multiplications of the form $Ax$ or $A\tp x$. 

\emph{Computation of the gradient.}
By Lemma \ref{lem:gradient}, computing the vector $\conge(x) = A\tp L^{-1}(x) b$ is enough to compute the gradient at $x$, since $\nabla f(x) = \ones-\conge^2(x)$. To compute $\conge(x)$, it is enough to solve the linear system $L(x) p = b$ for $p$, then premultiply the solution with $A\tp$. Note that since $L(x) = A X A\tp$, the system $L(x) p = b$ is a symmetric linear system with a positive definite constraint matrix. 

\emph{Warm start.} Heuristically, the solution of the system $L(x^{k+1}) p = b$, which is required to compute the gradient at iteration $k+1$, can be expected to be close to that of the system $L(x^k) p = b$ when $x^{k+1}$ is close to $x^k$. Hence, one possibility in practice is to use the solution obtained at step $k$ to warm-start the linear equation solver at step $k+1$, with a possible substantial reduction in the computational cost of each iteration. 

\emph{Initial point and exit criterion.} 
We assumed the starting point is the least square solution in Theorem \ref{thm:ags-for-dm}, but this was only to optimize the worst-case iteration bound. In fact, Theorem \ref{thm:pgs-for-dm} and Eq.~\eqref{eq:intermediate-bound-ags} always apply and the schemes we discussed do not require a special initialization apart from membership into $\omegaset$; hence, any point that is not too close to the boundary of the positive orthant is a suitable starting point. 
We can stop the schemes after the number of iterations $k$ is large enough to ensure the error guarantees of Theorems \ref{thm:pgs-for-dm} and \ref{thm:ags-for-dm} (or Eq.~\eqref{eq:intermediate-bound-ags}). 
Alternatively, a natural exit criterion in practice can be based on the duality gap provided by Theorem \ref{thm:gap}. 

\emph{Obtaining feasible iterates for (BP).}
The algorithms as described above produce iterates in the positive orthant, that is, iterates that are feasible for (DM), but after all, our goal was to obtain feasible iterates of (BP). By using the ideas of Lemma \ref{lem:bp-leq-dm}, we can easily associate with any iterate $x^k \in \pos$ an iterate $s^k$ that is feasible for (BP), and the cost of which is not larger than the dissipation cost of $x^k$: namely, take $s^k = q(x^k) = X^k A\tp L(x^k)^{-1} b$. By the proof of Lemma \ref{lem:bp-leq-dm}, we know that $\norm[1]{s^k} \le f(x^k)/2$. Thus, the error bounds for $f(x^k)$ can be directly translated into error bounds for $\norm[1]{s^k}$.  Note that $s^k$ can be computed essentially for free, since $s^k = X^k d(x^k)$ and $d(x^k)$ is a byproduct of the gradient computation at iteration $k$. 


\section{Numerical comparison with other algorithms for $\ell_1$-minimization}
\label{sec:numerical}
We include in this section a numerical comparison of our schemes to other well-known algorithms for $\ell_1$-minimization. The results suggest that both the primal scheme and a slightly revised accelerated scheme converge at a geometric rate, that is, much faster than what our theoretical analysis guarantees. This suggests the open problem of improving the quality of our error bounds. 

To compare our approaches to other algorithms for $\ell_1$-minimization, we implemented them in MATLAB and ran the \texttt{l1benchmark} suite by Yang et al.~\cite{Yang:2013}, which includes implementations of many other $\ell_1$-minimization solvers. 
A representative comparison is shown in Figure \ref{fig:exp1}. 
\begin{figure}
\begin{center}
\subfloat{\includegraphics[scale=0.5]{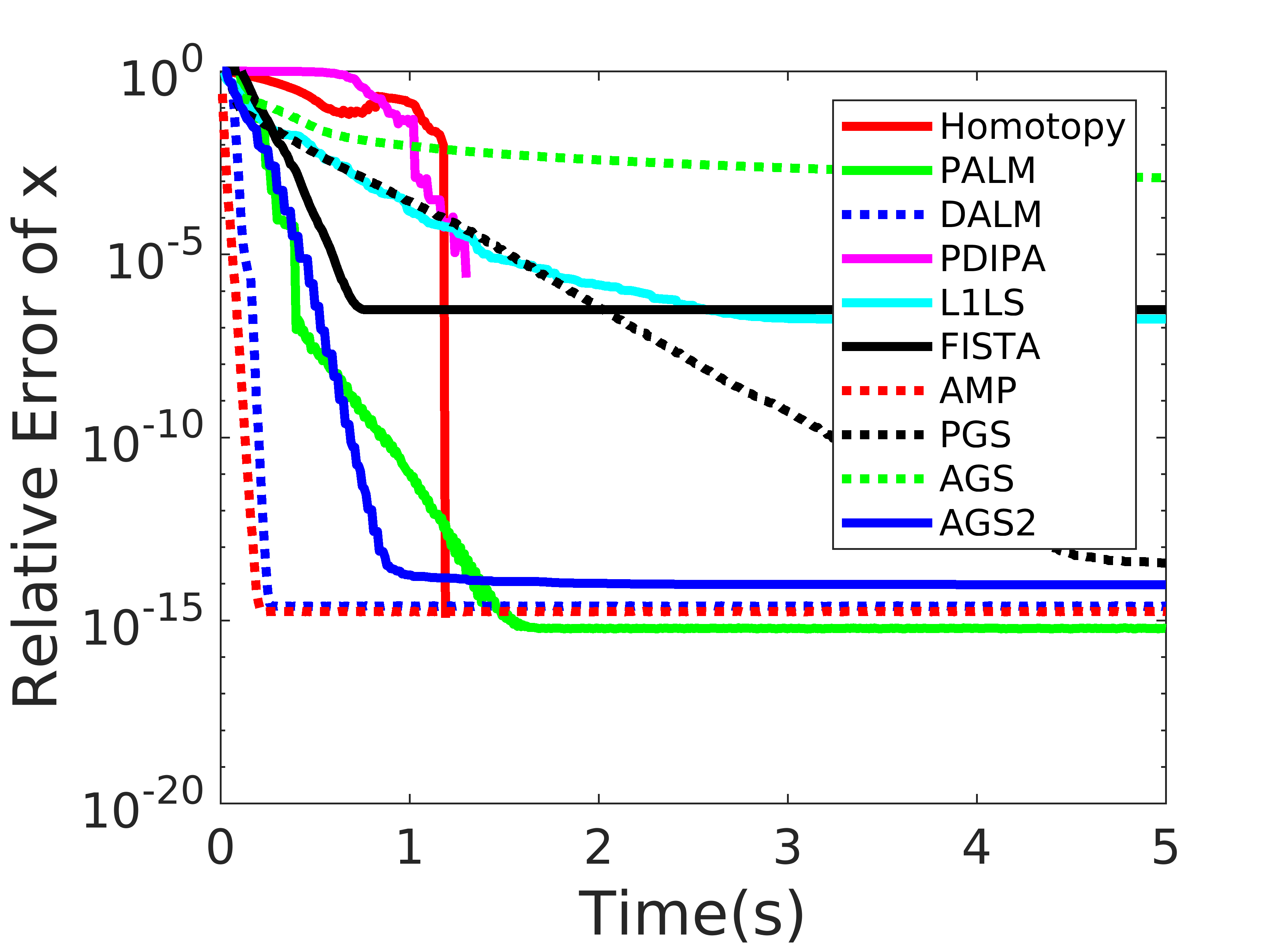}}
\subfloat{\includegraphics[scale=0.5]{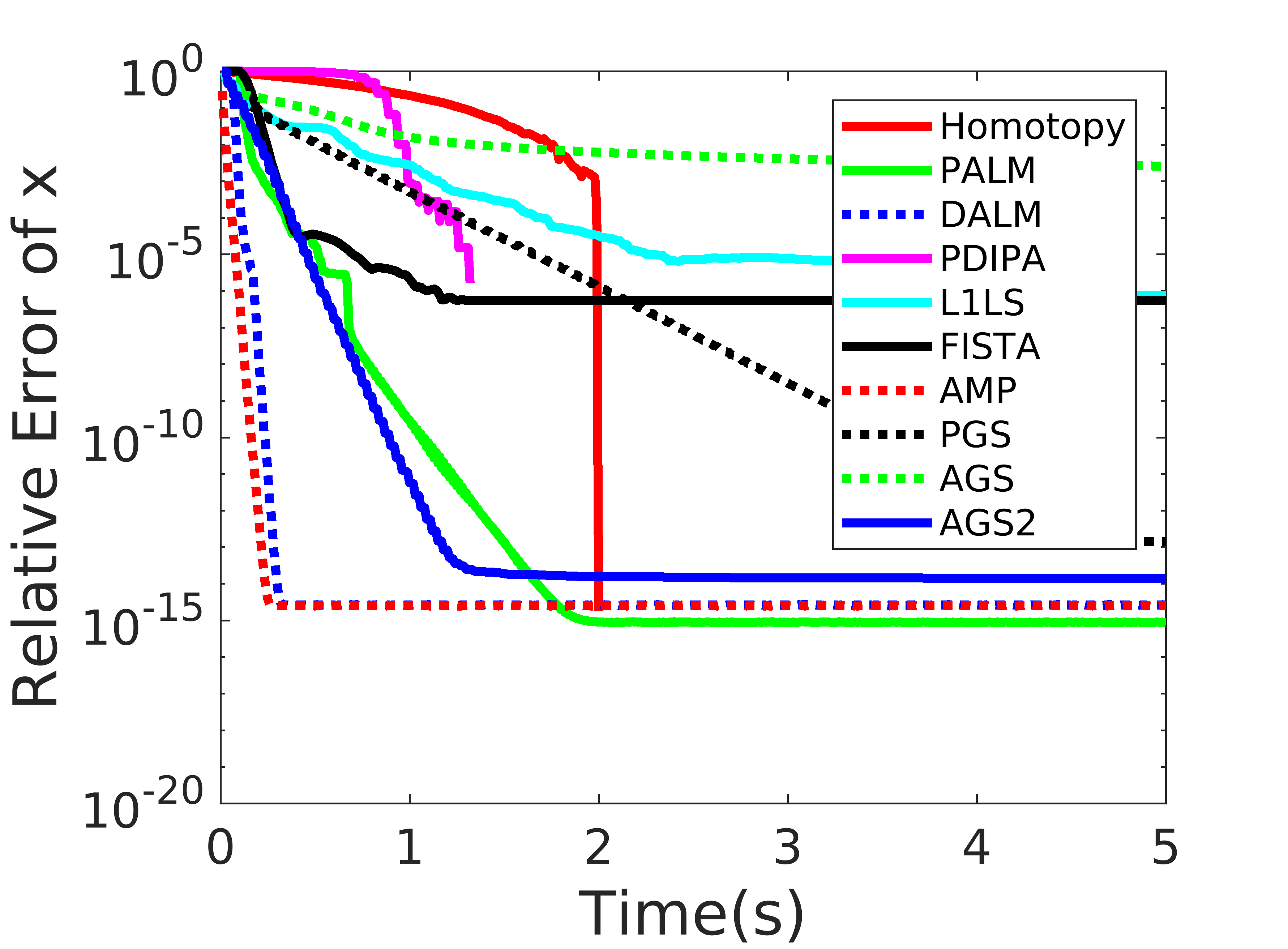}}
\caption{Results from the {\tt l1benchmark} package. Left: $m=1000$, $n=800$, $20\%$ density; right: $m=1000$, $n=800$, $30\%$ density}
\label{fig:exp1}
\end{center}
\end{figure}
The figure plots the relative error of the algorithms as a function of computation time, averaged on 20 randomly generated instances (with $m=1000$, $n=800$, and 20\% or 30\% nonzeros in the ground truth solution). The implementations based on our approaches are:
\begin{itemize}
\item the Primal Gradient Scheme of Section \ref{sec:pgs} (PGS, with $\beta=3.5$, $\delta=10^{-15}$),
\item the Accelerated Gradient Scheme of Section \ref{sec:ags} (AGS, with $\beta=3.5$, $\delta=10^{-15}$, $\tau_k=2/(k+3)$), and 
\item a revised Accelerated Gradient Scheme, which we formulate below (AGS2, with $\beta=1.1$, $\delta=10^{-15}$, $\tau_k=10^{-15}$). 
\end{itemize} 
Other algorithms measured in the experiment are the Homotopy method, the primal and dual augmented Lagrangian methods (PALM, DALM), the primal-dual interior point method (PDIPA), the truncated Newton interior point method (L1LS), the fast iterative soft-thresholding method (FISTA), and the approximate message passing method (AMP). We refer the reader to Yang et al.~\cite{Yang:2013} for references and discussion of these other methods. We observe, incidentally, that many of these methods construct points that are only approximately feasible for (BP), since they relax the constraint $As=b$ into the objective function, in one form or the other. 


In the experiments, 
PGS clearly exhibits a geometric convergence rate, which is much better than what Theorem \ref{thm:pgs-for-dm} guarantees, strongly suggesting that an improved theoretical analysis may be possible. Over time, PGS essentially reaches the machine precision barrier ($ \approx 10^{-15}$), which is not true for other methods in the benchmark, such as FISTA, L1LS or the interior point method (PDIPA). 

AGS, on the other hand, appears to be rather inaccurate in practice and does not exhibit a substantially better behavior than what is guaranteed by Theorem \ref{thm:ags-for-dm}.  
This suggests that the entropic form of the updates -- used in PGS but not in AGS -- might have a high impact in practice. Therefore, we also benchmark a revised algorithm (AGS2) obtained by adopting an entropic form of the AGS updates \eqref{eq:explicit-ags-1}--\eqref{eq:explicit-ags-2}, as follows (colored terms are new): 
\begin{align}
y^k_j &= \max \{ \delta, x^k_j - {\color{blue} x^k_j} \cdot \beta^{-1} [\nabla f(x^k)]_j \} \\
z^k_j &= \max \{ \delta, x^0_j - {\color{blue} x^0_j} \cdot \beta^{-1} [\sum_i \alpha_i \nabla f(x^i)]_j \} \\
x^{k+1} &= \tau_k z^k + (1-\tau_k) y^k. 
\end{align}
The resulting scheme AGS2 is seen in Figure \ref{fig:exp1} to exhibit a geometric convergence rate and to be competitive against some of the best results in the benchmark, such as those of the primal augmented Lagrangian method (PALM). 

\section{Conclusions}
\label{sec:concl}
We proposed a novel exact reformulation of the basis pursuit problem, which leads to a new family of gradient-based, IRLS-type methods for its solution. We then analyzed the iteration complexity of a natural optimization approach to the reformulation, based on the mirror descent scheme, as well as the iteration complexity of an accelerated gradient method. 
The first scheme can be seen as the discretization of a Hessian gradient flow and also as a variant on the Physarum dynamics, derived purely from optimization principles. The accelerated method, on the other hand, improves the error dependency for IRLS-type methods for basis pursuit, from $\eps^{-8/3}$ to $\eps^{-2}$. The experimental convergence rate of the first scheme, as well as that of a simple variant the second scheme, appears to be geometric. We interpret this as evidence that the dissipation minimization perspective may stimulate even more approaches to the design and analysis of efficient and practical IRLS-type methods.


\begin{thebibliography}{10}

\bibitem{Alvarez:2004}
F.~Alvarez, J.~Bolte, and O.~Brahic.
\newblock Hessian {Riemannian} gradient flows in convex programming.
\newblock {\em SIAM J. Control and Optimization}, 43(2):477--501, 2004.

\bibitem{Amari:2016}
S.~Amari.
\newblock {\em Information Geometry and Its Applications}.
\newblock Springer, 2016.

\bibitem{Arora:2012}
S.~Arora, E.~Hazan, and S.~Kale.
\newblock The multiplicative weights update method: a meta-algorithm and
  applications.
\newblock {\em Theory of Computing}, 8(1):121--164, 2012.

\bibitem{Bach:2012}
{F.~R.} Bach, R.~Jenatton, J.~Mairal, and G.~Obozinski.
\newblock Optimization with sparsity-inducing penalties.
\newblock {\em Foundations and Trends in Machine Learning}, 4(1):1--106, 2012.

\bibitem{Bansal:2019}
N.~Bansal and A.~Gupta.
\newblock Potential-function proofs for gradient methods.
\newblock {\em Theory of Computing}, 15(4):1--32, 2019.

\bibitem{Bauschke:2017}
{H.~H.} Bauschke, J.~Bolte, and M.~Teboulle.
\newblock A descent lemma beyond {Lipschitz} gradient continuity: First-order
  methods revisited and applications.
\newblock {\em Math. Oper. Res.}, 42(2):330--348, 2017.

\bibitem{Beck:2015}
A.~Beck.
\newblock On the convergence of alternating minimization for convex programming
  with applications to iteratively reweighted least squares and decomposition
  schemes.
\newblock {\em SIAM Journal on Optimization}, 25(1):185--209, 2015.

\bibitem{Beck:2017}
A.~Beck.
\newblock {\em First-Order Methods in Optimization}.
\newblock SIAM, 2017.

\bibitem{Beck:2003}
A.~Beck and M.~Teboulle.
\newblock Mirror descent and nonlinear projected subgradient methods for convex
  optimization.
\newblock {\em Oper. Res. Lett.}, 31(3):167--175, 2003.

\bibitem{Becker:2017}
R.~Becker, V.~Bonifaci, A.~Karrenbauer, P.~Kolev, and K.~Mehlhorn.
\newblock Two results on slime mold computations.
\newblock {\em Theoretical Computer Science}, 773:79--106, 2019.

\bibitem{Bloomfield:1983}
P.~Bloomfield and W.~L. Steiger.
\newblock {\em Least Absolute Deviations: Theory, Applications, and
  Algorithms}.
\newblock Birkh{\"a}user, 1983.

\bibitem{Bollobas:1998}
B.~Bollob{\'a}s.
\newblock {\em Modern Graph Theory}.
\newblock Springer, New York, 1998.

\bibitem{Bonifaci:2017:b}
V.~Bonifaci.
\newblock On the convergence time of a natural dynamics for linear programming.
\newblock In {\em Proc.~of the 28th Int.~Symposium on Algorithms and
  Computation}, pages 17:1--17:12. Schloss Dagstuhl--Leibniz-Zentrum fuer
  Informatik, 2017.

\bibitem{Bonifaci:2012}
V.~Bonifaci, K.~Mehlhorn, and G.~Varma.
\newblock Physarum can compute shortest paths.
\newblock In {\em Proc. of the 23rd ACM-SIAM Symposium on Discrete Algorithms},
  pages 233--240. SIAM, 2012.

\bibitem{Boyd:2004}
S.~Boyd and L.~Vanderberghe.
\newblock {\em Convex Optimization}.
\newblock Cambridge University Press, 2004.

\bibitem{Candes:2005}
E.~Cand{\`e}s and J.~Romberg.
\newblock $\ell_1$-magic: Recovery of sparse signals via linear programming.
\newblock
  \url{https://statweb.stanford.edu/~candes/l1magic/downloads/l1magic.pdf},
  2005.

\bibitem{Chartrand:2008}
R.~Chartrand and W.~Yin.
\newblock Iteratively reweighted algorithms for compressive sensing.
\newblock In {\em Proc.~of IEEE Int. Conf. on Acoustics, Speech and Signal
  Processing}, pages 3869--3872. IEEE, 2008.

\bibitem{Chen:2001}
S.~S. Chen, D.~L. Donoho, and M.~A. Saunders.
\newblock Atomic decomposition by basis pursuit.
\newblock {\em {SIAM} Review}, 43(1):129--159, 2001.

\bibitem{Chin:2013}
H.~H. Chin, A.~Madry, G.~L. Miller, and R.~Peng.
\newblock Runtime guarantees for regression problems.
\newblock In {\em Proc. of Innovations in Theoretical Computer Science}, pages
  269--282. ACM, 2013.

\bibitem{Christiano:2011}
P.~Christiano, J.~A. Kelner, A.~Madry, D.~A. Spielman, and S.-H. Teng.
\newblock Electrical flows, {Laplacian} systems, and faster approximation of
  maximum flow in undirected graphs.
\newblock In {\em Proc. of the 43rd ACM Symp. on Theory of Computing}, pages
  273--282. ACM, 2011.

\bibitem{Daubechies:2010}
I.~Daubechies, R.~DeVore, M.~Fornasier, and C.S. G{\"u}nt{\"u}rk.
\newblock Iteratively reweighted least squares minimization for sparse
  recovery.
\newblock {\em Comm. on Pure and Applied Mathematics}, 63(1):1--38, 2010.

\bibitem{Ene:2019}
A.~Ene and A.~Vladu.
\newblock Improved convergence for $\ell_1$ and $\ell_\infty$ regression via
  iteratively reweighted least squares.
\newblock In {\em Proceedings of the 36th International Conference on Machine
  Learning}, pages 1794--1801, 2019.

\bibitem{Facca:2019}
E.~Facca, F.~Cardin, and M.~Putti.
\newblock Physarum dynamics and optimal transport for basis pursuit.
\newblock {\tt arXiv:1812.11782v1 [math.NA]}, 2019.

\bibitem{Foucart:2013}
S.~Foucart and H.~Rauhut.
\newblock {\em A Mathematical Introduction to Compressive Sensing}.
\newblock Birkh{\"{a}}user, 2013.

\bibitem{Ghosh:2008}
A.~Ghosh, S.~Boyd, and A.~Saberi.
\newblock Minimizing effective resistance of a graph.
\newblock {\em {SIAM} Review}, 50(1):37--66, 2008.

\bibitem{Green:1984}
P.~J. Green.
\newblock Iteratively reweighted least squares for maximum likelihood
  estimation, and some robust and resistant alternatives.
\newblock {\em Journal of the Royal Statistical Society, Series B},
  46(2):149--192, 1984.

\bibitem{Hofbauer:1998}
J.~Hofbauer and K.~Sigmund.
\newblock {\em Evolutionary Games and Population Dynamics}.
\newblock Cambridge University Press, 1998.

\bibitem{Horn:2013}
R.~A. Horn and C.~R. Johnson.
\newblock {\em Matrix Analysis}.
\newblock Cambridge University Press, 2013.

\bibitem{Lu:2018}
H.~Lu, R.~M. Freund, and Yu. Nesterov.
\newblock Relatively smooth convex optimization by first-order methods, and
  applications.
\newblock {\em {SIAM} Journal on Optimization}, 28(1):333--354, 2018.

\bibitem{Nemirovski:1983}
A.~S. Nemirovski and D.~B. Yudin.
\newblock {\em Problem Complexity and Method Efficiency in Optimization}.
\newblock John Wiley, 1983.

\bibitem{Nesterov:2005}
Yu. Nesterov.
\newblock Smooth minimization of non-smooth functions.
\newblock {\em Math. Program.}, 103(1):127--152, 2005.

\bibitem{Osborne:1985}
M.~R. Osborne.
\newblock {\em Finite Algorithms in Optimization and Data Analysis}.
\newblock Wiley, 1985.

\bibitem{Rockafellar:1970}
R.~T. Rockafellar.
\newblock {\em Convex Analysis}.
\newblock Princeton University Press, 1970.

\bibitem{Strang:1988}
G.~Strang.
\newblock A framework for equilibrium equations.
\newblock {\em SIAM Review}, 30(2):283--296, 1988.

\bibitem{Straszak:2016:irls}
D.~Straszak and N.~K. Vishnoi.
\newblock {IRLS} and slime mold: Equivalence and convergence.
\newblock {\tt arXiv:1601.02712 [cs.DS]}, 2016.
\newblock \href {http://arxiv.org/abs/1601.02712} {\path{arXiv:1601.02712}}.

\bibitem{Straszak:2016:soda}
D.~Straszak and N.~K. Vishnoi.
\newblock Natural algorithms for flow problems.
\newblock In {\em Proc. of the 27th {ACM-SIAM} Symposium on Discrete
  Algorithms}, pages 1868--1883. SIAM, 2016.

\bibitem{Tero:2007}
A.~Tero, R.~Kobayashi, and T.~Nakagaki.
\newblock A mathematical model for adaptive transport network in path finding
  by true slime mold.
\newblock {\em Journal of Theoretical Biology}, 244:553--564, 2007.

\bibitem{Wilson:2018}
A.~Wilson.
\newblock Lyapunov arguments in optimization.
\newblock Ph.D.~dissertation, University of California at Berkeley, 2018.

\bibitem{Yang:2013}
A.~Y. Yang, Z.~Zhou, A.~G. Balasubramanian, S.~S. Sastry, and Y.~Ma.
\newblock Fast $\ell_1$-minimization algorithms for robust face recognition.
\newblock {\em {IEEE} Trans. Image Processing}, 22(8):3234--3246, 2013.

\end{thebibliography}

%
%
%
%





\end{document}